\documentclass[acmsmall,nonacm]{acmart}
\AtBeginDocument{%
  }

\usepackage{enumitem}
\usepackage{algorithm}
\usepackage{algorithmic}
\usepackage{mathtools}
\usepackage{subcaption}

\usepackage{listings}
\usepackage{xcolor}

\lstdefinelanguage{ASP}{
  morekeywords={not},
  sensitive=true,
  morecomment=[l]\%,
  morestring=[b]"
}

\newcommand{\B}[0]{\mathfrak{B}}

\newcommand{\fbar}[1]{f_{#1}^u}
\newcommand{\SPbar}[1]{(S_P)_{#1}^u}
\newcommand{\SPubar}[1]{(S_P)_{#1}^{\ell}}

\newcommand{\xbar}[1]{x_{#1}^u}
\newcommand{\fubar}[1]{f_{#1}^{\ell}}

\newcommand{\xubar}[1]{x_{#1}^{\ell}}
\newcommand{\lfp}[1]{\operatorname{lfp}(#1)}
\newcommand{\lfpB}[1]{\operatorname{lfp}_B(#1)}
\newcommand{\lfpS}[2]{\operatorname{lfp}_{#2}(#1)}

\newcommand{\gfp}[1]{\operatorname{gfp}(#1)}
\newcommand{\gfpB}[1]{\operatorname{gfp}_B(#1)}
\newcommand{\gfpS}[2]{\operatorname{gfp}_{#2}(#1)}
\newcommand{\bbound}[0]{[x^{\ell},x^u]}
\newcommand{\ok}[1]{\mathbf{ok}_{#1}}
\newcommand{\fix}[1]{\operatorname{Fix}(#1)}

\begin{document}

\title{Bounding Fixed Points of Non-Monotone Processes: Theory to Practice}

\author{Abdullah Rasheed}
\email{arasheed@utexas.edu}
\orcid{0009-0004-7475-9260}
\affiliation{%
  \institution{The University of Texas at Austin}
  \city{Austin}
  \state{Texas}
  \country{USA}
}

\author{Vijay K. Garg}
\email{garg@ece.utexas.edu}
\orcid{0000−0002−5797−4389}
\affiliation{%
  \institution{The University of Texas at Austin}
  \city{Austin}
  \state{Texas}
  \country{USA}
}


\begin{abstract}
  Many modern solvers and program analyzers rely on non-monotone reasoning (e.g. negation-as-failure, speculative updates, backtracking) for which classical monotone fixed-point methods do not apply. The general problem of finding the fixed points of these processes is a difficult one. For this reason, there have been theoretical efforts in existing Approximation Fixpoint Theory (AFT) from the domain of logic programming to approximate fixed points of non-monotone operators. Tight approximations of these fixed points are highly useful for accelerating non-monotonic computations by restricting the search space. In practice, however, even the best approximations obtained through AFT can be coarse and computationally expensive. We aim to address both issues to make AFT approximation methods practical for use in programming languages (PL) settings. To mitigate inefficiency, we prove the soundness of an abstract interpretation for approximating operators. To improve upon coarse approximations, we carefully introduce controlled unsoundness to design an effective yet practical algorithm for partitioning and tightening AFT’s best approximations. This algorithm is sound, anytime, and guarantees termination on finite-height lattices. We further present a modification that ensures polynomial-time complexity. We instantiate these methods in two settings: (1) answer set programming, where it serves as a convergence-accelerating pre-processor, and (2) speculative program analysis, where it reduces rollback while preserving soundness. In both settings, we focus on implementation-level details to demonstrate the practical applicability of our methods.
\end{abstract}

\begin{CCSXML}
<ccs2012>
   <concept>
       <concept_id>10003752.10003790.10003795</concept_id>
       <concept_desc>Theory of computation~Constraint and logic programming</concept_desc>
       <concept_significance>500</concept_significance>
       </concept>
   <concept>
       <concept_id>10003752.10003790.10011119</concept_id>
       <concept_desc>Theory of computation~Abstraction</concept_desc>
       <concept_significance>300</concept_significance>
       </concept>
   <concept>
       <concept_id>10003752.10003809.10011254.10011256</concept_id>
       <concept_desc>Theory of computation~Branch-and-bound</concept_desc>
       <concept_significance>500</concept_significance>
       </concept>
   <concept>
       <concept_id>10011007.10011074.10011099.10011692</concept_id>
       <concept_desc>Software and its engineering~Formal software verification</concept_desc>
       <concept_significance>100</concept_significance>
       </concept>
 </ccs2012>
\end{CCSXML}

\ccsdesc[500]{Theory of computation~Constraint and logic programming}
\ccsdesc[300]{Theory of computation~Abstraction}
\ccsdesc[500]{Theory of computation~Branch-and-bound}
\ccsdesc[100]{Software and its engineering~Formal software verification}

\keywords{Non-monotonic fixed-point approximation, Iterative refinement, Speculative processes, Answer set programming}


\maketitle

\section{Introduction}

Many analyses and algorithms in programming languages are designed around \emph{monotonic} computation. Monotonicity provides a well-understood foundation: once a fact is derived, it remains true, and states evolve along a chain $S_1 \sqsubseteq S_2 \sqsubseteq \cdots$ that reaches a fixed point. This property underlies a vast range of techniques from dataflow analysis and abstract interpretation to type inference and constraint solving. However, monotonicity also comes at a cost: once a decision is made, it cannot be retracted. This rigidity limits the ability of such systems to model \emph{speculative reasoning} or \emph{reversible computation}, both of which are increasingly common in modern algorithms and implementations.

In contrast, \emph{non-monotonic} processes allow information to be withdrawn or revised. Such behavior arises naturally in domains that rely on backtracking, revision, or speculation. For example, the decision process of Conflict-Driven Clause Learning (CDCL) SAT solvers backtracks when a decision leads to a conflict \cite{marques2009}; speculative optimizations in dynamic compilers and program analyses assume properties that may later be invalidated \cite{holzle1992,fluckiger2017,devecsery2018}; and non-monotonic reasoning forms the basis of Answer Set Programming (ASP) which is widely utilized in knowledge representation and has been used in program analysis, model checking, constraint solving, and much more \cite{marek1997,erdem2016,heljanko2003,dawson1996}. These examples demonstrate that, in practical applications, non-monotonicity is not a defect to be avoided but a powerful mechanism for adaptability and efficiency.

Speculative program analyses are modeled in this paper as a concrete instance of non-monotonic behavior. Such analyses often assume facts that may later be invalidated as new information becomes available, forcing them to retract or revise earlier speculations. We demonstrate that there exists a non-monotone (in fact, anti-monotone) operator whose fixed points coincide exactly with the sets of ``stable'' assumptions. That is, an operator $\Phi$ that makes necessary revisions to the set of assumptions $\sigma$ and whose fixed points $\sigma = \Phi(\sigma)$ are the sets that are not refuted in the speculative analysis computed by assuming $\sigma$. Furthermore, with a larger set of assumptions, a more precise analysis is computed, which in turn can disprove more assumptions. It is then evident that $\Phi$ is non-monotone (in fact, anti-monotone). Identifying these stable sets of assumptions enables analyses to improve precision and optimize performance. For instance, a dynamic analysis can apply a given set of stable assumptions to reduce unnecessary retractions or deoptimizations caused by runtime counter-examples.

Despite its practical relevance, the approximation of fixed points of non-monotone functions has received little attention in programming languages. Existing approaches rely heavily on monotonicity to guarantee convergence and soundness, while non-monotonicity is often feared for its lack thereof. 
This paper bridges that gap by taking advantage of existing theory and introducing new results to present practical and effective approaches for approximating the fixed points of non-monotone functions in PL settings while improving upon theoretical barriers.

\paragraph{Contributions} Our main contributions are as follows:
\begin{itemize}
  \item \textbf{AFT with Abstractions.} We prove a new soundness theorem for abstract interpretation of non-monotone fixed point approximation, enabling quicker computation of AFT operators abstractly.
  \item \textbf{Branch-and-Bound Approximations.} We devise an effective \emph{anytime} branch-and-bound style algorithm that often produces a set of tighter approximations than AFT prescribes. In particular, it finds a set of approximations that spans all fixed points such that the total size of all bounds is bounded from above by the size of the best possible approximation guaranteed by AFT. We establish termination (for finite height lattices) and soundness guarantees while retaining practical efficiency by leveraging structural properties of a generalized AFT operator to prune the search space. We also introduce modifications that guarantee polynomial time complexity in the height of the lattice (when the lattice is finite height and the approximation refinement is efficient, enabled by lattice abstractions, structural properties of the given function, or a restricted refinement).
  \item \textbf{Feasibility and Demonstration in PL Practice.} We model speculative program analysis in this paradigm, showing that the approach integrates cleanly in PL settings and the algorithms deliver tangible benefits, thereby evidencing practical feasibility for such tasks. We focus on implementation-level details, outlining the trade-offs with integration, and the use cases where these methods are most effective.
  \item \textbf{Use Case for Pre-Processing.} We demonstrate, using the example of ASP solving, how our methods can be used as a pre-processing step for problem domains in our paradigm by substantially reducing the search space size, thereby improving efficiency. We again focus on implementation-level details and illustrate how our proposed algorithm can reduce the workload of existing ASP solvers.
\end{itemize}

\section{Related Work}
\label{sec:related-works}

The study of approximating fixed points of non-monotone operators is not a new one. In fact, an Approximation Fixpoint Theory (AFT) was proposed long ago for the purposes of analyzing, defining, and approximating semantics of non-monotone logics \cite{denecker2000}. Since its introduction, the theory has continued to develop and find use in the areas of knowledge representation and logic programming \cite{denecker2012,strass2013,liu2022}. AFT is an algebraic framework for the approximation of fixed points using ``approximation operators'' over a bilattice of pairs, with the property that they bound the fixed points of the original, possibly non-monotone function in an interval. In \cite{denecker2004}, the notion of an ``ultimate approximation'' is introduced as a necessarily existing unique and most precise (partial) approximation. Despite being the best approximation, it may still a coarse one (possibly even spanning the entire lattice). In this paper, we restate the ultimate approximation in a more concrete and compact manner, demonstrate its soundness for approximating subsets of fixed points in subspaces, and crucially, take advantage of these properties to design an algorithm that improves the bound beyond the ultimate approximation. 

AFT originated in the world of knowledge representation, and has been applied to the stable model and well-founded semantics of logic programs \cite{denecker2004,denecker2000,gelfond1988,gelder1991}. We further extend this example by applying our branch-and-bound style heuristic to narrow in on even tighter approximations for stable models in logic programs. Furthermore, we take a closer look at the implementation level details to see how our methods can serve as a pre-processor for speeding up ASP solvers. The work done in \cite{eiter2004} particularly enables the integration of these methods by providing equivalence guarantees under specific ``partial evaluations'' of a logic program.

To bridge this theory from the domain of knowledge representation to the practice of programming languages, we show that program analysis is a suitable use-case. Furthermore, we revisit the \emph{optimistic hybrid analysis} proposed in \cite{devecsery2018}. Their approach infers a set of ``likely invariants'' from a sample of executions. These likely invariants are then fed as assumptions to a static analyzer to produce an \emph{unsound} yet more precise analysis whose assumptions are later verified dynamically. However, as outlined in their paper, when those assumptions fail, the runtime system bears the cost. Optimizations are rolled back and the program must re-execute. We formalize the problem of finding a consistent assumption set by showing that it reduces to finding fixed points of a non-monotone operator. Depending on the application, these stable sets can represent assumptions that are not provably inconsistent or those provably consistent under speculative analysis. Our heuristic can be used to help effectively identify such stable sets, reducing dynamic rollback and improving predictability.

\section{Background}

A partially ordered set $(L, \leq)$ is a \emph{complete lattice} if every subset $S \subseteq L$ has both a least upper bound (supremum, or join, denoted $\bigsqcup S$) and a greatest lower bound (infimum, or meet, denoted $\bigsqcap S$). Every complete lattice has a top element ($\top = \bigsqcup L$) and a bottom element ($\bot = \bigsqcap L$). We also use $\sup$ and $\inf$ to represent $\bigsqcup$ and $\bigsqcap$ respectively. Two elements $x,y \in L$ are \emph{incomparable} if $x \nleq y$ and $y \nleq x$. This is denoted by $x||y$. The height of a lattice is $\ell(L) = \sup\{k \mid \bot_L = x_0 \leq \cdots \leq x_k = \top_L\}$. If $\ell(L)$ is not defined, then we may write $\ell(L) = \infty$.

Given two posets $(P, \leq_P)$ and $(Q, \leq_Q)$, a function $f: P \to Q$ is \emph{monotone} (or order-preserving) if for all $x, y \in P$, $x \leq_P y$ implies $f(x) \leq_Q f(y)$. On the other hand, $f$ is \emph{anti-monotone} if for all $x,y \in P$, $x \leq_P y$ implies $f(x) \geq_Q f(y)$.

\begin{theorem}[Knaster-Tarski Fixed-Point Theorem]
\label{thm:knaster-tarski}
Let $(L, \leq)$ be a complete lattice and $f: L \to L$ be a monotone function. The set of fixed points of $f$ in $L$ is a non-empty complete lattice. It contains a least fixed point $\lfp{f}$ and a greatest fixed point $\gfp{f}$, given by:
$$ \lfp{f} = \bigsqcap \{x \in L \mid f(x) \leq x\} $$
$$ \gfp{f} = \bigsqcup \{x \in L \mid x \leq f(x)\} $$
\end{theorem}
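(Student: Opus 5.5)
The plan is to follow the classical argument: first show that the proposed formula for $\lfp{f}$ really is the least fixed point. Then get $\gfp{f}$ by duality, and finally show that the fixed points form a complete lattice by running the same argument inside suitable subintervals.

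\textbf{Least fixed point.} Let $A = \{x \in L \mid f(x) \leq x\}$ and set $a = \bigsqcap A$, which exists because $L$ is complete; $A$ is nonempty since $\top \in A$.
\begin{enumerate}
\item For every $x \in A$ we have $a \leq x$. Monotonicity gives $f(a) \leq f(x) \leq x$, so $f(a)$ is a lower bound of $A$, and hence $f(a) \leq a$.
\item Applying $f$ to this inequality gives $f(f(a)) \leq f(a)$. So $f(a) \in A$, and therefore $a \leq f(a)$.
\item Together, $f(a) = a$.
\item Every fixed point $p$ satisfies $f(p) \leq p$, so $p \in A$ and $a \leq p$. Thus $a = \lfp{f}$.
\end{enumerate}

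\textbf{Greatest fixed point.} The formula $\gfp{f} = \bigsqcup\{x \mid x \leq f(x)\}$ follows by applying the same argument to the dual lattice $(L,\geq)$. There $f$ is still monotone, and meets and joins swap roles.

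\textbf{Completeness of the fixed-point set.} Let $S \subseteq \fix{f}$ and put $s = \bigsqcup S$ (computed in $L$). We need a least upper bound for $S$ within $\fix{f}$.
\begin{enumerate}
\item For each $p \in S$ we have $p = f(p) \leq f(s)$, so $s \leq f(s)$.
\item Consider the interval $[s,\top]$, which is itself a complete lattice. For $x \geq s$, monotonicity gives $f(x) \geq f(s) \geq s$, so $f$ restricts to a monotone map $[s,\top] \to [s,\top]$.
\item By the least-fixed-point part applied in $[s,\top]$, this restriction has a least fixed point $q$.
\item The element $q$ is a fixed point of $f$ and an upper bound of $S$. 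Any fixed point $r$ of $f$ that bounds $S$ from above lies in $[s,\top]$, so $q \leq r$. Hence $q$ is the join of $S$ in $\fix{f}$.
\item Meets in $\fix{f}$ are obtained dually, working on the interval $[\bot, \bigsqcap S]$.
\end{enumerate}
Nonemptiness of $\fix{f}$ follows since $\lfp{f}$ exists.

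\textbf{Main obstacle.} The delicate step is the completeness part. Joins in $\fix{f}$ need not coincide with joins in $L$, so one must check carefully that $f$ maps $[s,\top]$ into itself. That is exactly where the inequality $s \leq f(s)$ from step 1 is used.
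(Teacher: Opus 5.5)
Your proof is correct and complete. It is Tarski's standard argument: the meet of the pre-fixed points is a fixed point lying below every fixed point, $\operatorname{gfp}$ follows by order duality, and joins in $\fix{f}$ are the least fixed point of $f$ restricted to $[\bigsqcup S,\top]$, with the invariance of that interval correctly justified by $\bigsqcup S \leq f(\bigsqcup S)$. The paper gives no proof of this statement; it quotes it as classical background and uses it as a black box (e.g.\ in defining $x^u = \gfp{f^u}$ and the restricted $\lfpB{\cdot}$, $\gfpB{\cdot}$), so there is no alternative route to compare against.
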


An element $x \in L$ is called a \emph{pre-fixed point} if $f(x) \leq x$, and it is called a \emph{post-fixed} point if $x \leq f(x)$. Then, $\lfp{f}$ is the infimum of the pre-fixed points and $\gfp{f}$ is the supremum of the post-fixed point when $f$ is monotone. We will denote the set of fixed points of $f$ by $\fix{f} = \{x \in L \mid x = f(x)\}$.

A \emph{chain} is a subset $S \subseteq L$ such that for all $x,y \in S$, $x \leq y$ or $y \leq x$. A chain $\{x_k\}$ is ascending if $x_1 \leq x_2 \leq \ldots$, and it is descending if $x_1 \geq x_2 \geq \ldots$. 

An ascending or descending chain $\{x_k\}$ stabilizes if there is some $j$ such that $x_j = x_{j+1} = \ldots$. A lattice $L$ satisfies the \emph{Ascending Chain Condition} (ACC) if every ascending chain stabilizes, and it satisfies the \emph{Descending Chain Condition} if every descending chain stabilizes.

Let $C$ and $A$ be complete lattices. Then, if $\alpha : C \to A$ and $\gamma : A \to C$ are monotone, $\alpha \dashv \gamma$ iff $\alpha(x) \leq_A y \Leftrightarrow x \leq_C \gamma(y)$ ($\alpha$ is called the \emph{left-adjoint} and $\gamma$ is called the \emph{right-adjoint}). This pair of functions forms a \emph{Galois connection}. It is necessarily true that $\alpha$ preserves arbitrary joins and $\gamma$ preserves arbitrary meets. If $\alpha\gamma = \mathrm{id}_A$ (equivalently, $\alpha$ is surjective, and also equivalently, $\gamma$ is injective), then they form a \emph{Galois insertion}. 

\subsection{Answer Set Programming}

A (logic) program $P$ is a set of rules of the form $$A \gets L_1,\ldots,L_m$$ where $A$ is an atom and each $L_i$ is a literal (either a positive or a negative atom). The rule states, in an informal sense, that if $L_1,\ldots,L_m$ hold, then $A$ should also hold in a given interpretation of the ground atoms. A program $P$ is \emph{non-stratified} if it has recursion through negation (the exact definition will not be important in this paper). These programs are provably harder than stratified ones, and their main reasoning tasks are NP/coNP-complete \cite{eiter1995,marek1999}. The Herbrand base $H_P$ of $P$ is the set of all ground atoms occuring in $P$ (after grounding all variables), and an Herbrand model $M$ of $P$ is a subset of the Herbrand base such that setting the atoms of $M$ to true (and everything else to false; two-valued logic) satisfies the rules of $P$. For example, in the program
\begin{align*}
    a &\gets \\
    b &\gets a,c
\end{align*}
the Herbrand base is $\{a,b,c\}$ and the Herbrand models are $\{a\},\{a,b\},\{a,b,c\}$. Observe that $\{\}$ is \emph{not} an Herbrand model because the rule $a \gets$ requires every Herbrand model to contain $a$. Also, $\{a,c\}$ is \emph{not} an Herbrand model because the rule $b \gets a,c$ states that, since we have $a,c$, we should also have $b$.

An Herbrand model $M$ is \emph{minimal} if no strict subset of $M$ is an Herbrand model.

Given any subset $M$ of the Herbrand base, the Gelfond-Lifschitz reduction $P_M$ of the program $P$ is as follows:
\begin{enumerate}
    \item for all $A \in M$, remove all rules that contain $A$ negatively in the body (i.e. any rule of the form $B \gets \ldots ,\neg A,\ldots$).
    \item remove all negative literals in the bodies of all rules of $P$
\end{enumerate}
The reduct $P_M$ is always a \emph{positive} program. That is, it contains no negative literals and thus has a unique minimal Herbrand model. The unique minimal Herbrand model of a positive program can be found by repeatedly applying the rules of the program to the empty set until a fixed point is reached. The unique minimal Herbrand model of $P_M$ is denoted by $S_P(M)$. We call $S_P$ the Gelfond-Lifschitz (GL) operator (parameterized by $P$). Note that $S_P$ is anti-monotone since for larger $M$, the reduct $P_M$ is smaller.

A subset $M$ of the Herbrand base is called a \emph{stable model} (or \emph{answer set}) of $P$ if the unique minimal Herbrand model of $P_M$ is equal to $M$ (it is a fixed point of $S_P$). Stable models are ``self-supporting,'' and are of great importance in answer set programming. For example, when modeling combinatorial optimization problems, stable models often correspond to minimal solutions. This is shown in the following theorem.

\begin{theorem}
    Any stable model of $P$ is a minimal Herbrand model of $P$ \cite{gelfond1988}.
\end{theorem}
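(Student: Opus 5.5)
To prove the theorem, let $M$ be a stable model, so $M = S_P(M)$ is the least Herbrand model of the reduct $P_M$. I would prove two facts in order: first that $M$ is a Herbrand model of $P$, and second that no strict subset $N \subsetneq M$ is a Herbrand model of $P$.

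For the first fact, take any rule $A \gets B_1,\ldots,B_k,\neg C_1,\ldots,\neg C_j$ of $P$ whose body is satisfied by $M$. Then every $B_i \in M$ and no $C_i \in M$. Because no negated atom lies in $M$, step (1) of the Gelfond-Lifschitz reduction does not delete this rule. Step (2) then turns it into $A \gets B_1,\ldots,B_k$ in $P_M$. Since $M$ is a model of $P_M$ and contains all the $B_i$, we get $A \in M$. So every rule of $P$ is satisfied by $M$.

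For minimality, suppose $N \subsetneq M$ is a Herbrand model of $P$. I would show that $N$ is then also a model of $P_M$. Take any rule $A \gets B_1,\ldots,B_k$ of $P_M$. It comes from a rule $A \gets B_1,\ldots,B_k,\neg C_1,\ldots,\neg C_j$ of $P$ with no $C_i \in M$. Since $N \subseteq M$, also no $C_i \in N$. So if all $B_i \in N$, then $N$ satisfies the body of the original rule, and because $N$ models $P$, we get $A \in N$. Hence $N$ is a model of the positive program $P_M$. But $M$ is the least model of $P_M$, so $M \subseteq N$, which contradicts $N \subsetneq M$.

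The key step, and the only subtle one, is the downward transfer in the minimality argument: negative literals that are vacuously satisfied with respect to $M$ stay satisfied for any subset of $M$. This is exactly the anti-monotonicity of $S_P$ noted earlier, seen at the level of individual rules. I would also remark that the least model of a positive program is the intersection of all its models. This justifies the step ``$M$ is least, hence $M \subseteq N$'', and it agrees with the iterative characterization the paper gives for $S_P(M)$.
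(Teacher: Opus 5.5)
Your proof is correct. The paper gives no proof of its own here and simply cites Gelfond and Lifschitz, whose argument is exactly your two steps: $M$ models $P$ because every rule whose body $M$ satisfies survives into $P_M$, and any model $N \subseteq M$ of $P$ is also a model of $P_M$. The intersection/least-model remark is optional, since a model of $P_M$ properly contained in $M$ already contradicts $M = S_P(M)$ being the minimal model of $P_M$.
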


\section{Initial Monotone Approximations}
\label{sec:init-approx}

To approximate the fixed points of a (possibly) non-monotone operator, we approximate it from above and below. In particular, we approximate it using monotone envelopes for ease of computation. This is effectively a more concrete and digestible restatement of the ultimate approximation \cite{denecker2004}. The discussion in this section is valuable, but generalization begins at Section \ref{sec:soundness-thms} and results needed for our later algorithm subsequently follow.

For a given function, we are interested in the \emph{least} monotone over-approximation. This will allow us to use a monotone approximating function to find an initial approximation that will be refined. Let $L$ be a complete lattice and $f : L \to L$. We then define \[f^u(x) = \sup_{y \leq x} f(y)\] and observe that $f^u$ is the pointwise least (the \emph{best}) monotone function over-approximating $f$ \cite{cousot1979}. Our eventual goal will be to iteratively restrict the domain for which $f^u$ is monotone to improve the approximation.

To do this, we also utilize a best monotone under-approximation of $f$. Define $f^{\ell}$ as
\[
f^{\ell} (x) = \inf_{y \geq x} f(y)
\]
and observe the dual property that $f^{\ell}$ is the greatest (the \emph{best}) monotone under-approximation of $f$. These envelopes $f^{\ell},f^u$ correspond to the lower/upper borders of AFT's ultimate approximation. To approximate solutions to $x = f(x)$, these approximations will work dually together to improve each other's approximation by \emph{restricting} the domain over which the other function needs to be monotone. For the sake of simplicity throughout this paper, we will assume that $f$ indeed has a fixed point.

We now have best upper and lower bound monotone functions $f^u, f^{\ell}$ for $f$. For the purposes of fixed point computation/approximation, there are a variety of methods known for monotone functions. We will use the easier problem of finding the least fixed point and greatest fixed point of monotone functions (namely, $f^u$ and $f^{\ell}$) to approximate the fixed points of the non-monotone function $f$.
As we will soon see, this approximation will be a bound containing all fixed points of $f$ (given they exist).  This final bound provides a restricted search space for the fixed points of $f$ along with a verifiable certificate of ``closeness'' to a fixed point.

Also observe that, at face value, computing $f^u$ and $f^{\ell}$ is computationally infeasible. It may require taking the supremum/infimum over a very large subset, which can be just as expensive to compute as a brute force search. In the examples presented throughout this paper, we observe that in many scenarios (when the envelopes have closed-form solutions), these approximating monotone functions are much more computationally feasible than they initially seem. Anti-monotone functions will later be of particular interest. To remedy the cases that the approximating functions are \emph{not} computationally feasible, we later prove that an abstract interpretation can be induced for the approximation operator while maintaining soundness in the resulting bounds.

\begin{example}
    \label{example:step-1}
    Throughout this paper, we will demonstrate the refinement process on a non-monotone function $f$ that represents the ``best response'' of two actors in a system. The fixed points of this function can be thought of as the Nash Equilibrium of a game with two players.

    Let $L = [0,1] \times [0,1]$ ordered component-wise, and let $$f(x_1,x_2) = (BR_1(x_2),BR_2(x_1))$$ be a function where $x_1,x_2$ are the current fractions of resources that player 1 and player 2 have claimed respectively. The ``best response'' of player 1 is $BR_1(x_2) = 1-x_2^2$ (if player 2 takes more resources, player 1 takes quadratically less). The ``best response'' of player 2 is $BR_2(x_1) = x_1/2$ (if player 1 claims more resources, player 2 claims a fraction of them).

    Observe that $f$ is non-monotone: $(0.5,0.5) \leq (1,1)$ but $f(0.5,0.5) = (0.75,0.25)$ and $f(1,1) = (0,0.5)$. To simplify the approximating monotone functions, first observe that $f(x_1,x_2)$ is maximized when $x_2$ is minimized and $x_1$ is maximized. Then, the over-approximating monotone function simplifies to
    \[
    f^u(x_1,x_2) = \sup_{(y_1,y_2) \leq (x_1,x_2)} f(y_1,y_2) = f(x_1,0)
    \]

    For the under-approximating function, observe that $f(x_1,x_2)$ is minimized when $x_2$ is maximized and $x_1$ is minimized. Then the under-approximating monotone function simplifies to
    \[
    f^{\ell}(x_1,x_2) = \inf_{(x_1,x_2) \leq (y_1,y_2)} f(y_1,y_2) = f(x_1,1)
    \]
\end{example}

To obtain a safe bound on the fixed points, we progress the upper and lower bound cautiously by finding safe solutions to $x \leq f^u(x)$ and $x \geq f^{\ell}(x)$.
Since $f^u$ and $f^{\ell}$ are monotone, they have a greatest/least solution to their respective problems by the lattice's partial-order unlike $f$. In particular, the safest (greatest) solution to $x \leq f^u(x)$ is
\[
\bigsqcup \{x \in L \mid x \leq f^u(x)\} = \operatorname{gfp}(f^u)
\]
by the Knaster-Tarski fixed point theorem. The dual holds for $f^{\ell}$. Let 
\begin{align*}
    x^u &= \gfp{f^u}\\
    x^{\ell} &= \lfp{f^{\ell}}
\end{align*}

\begin{theorem}
    For all fixed points $\mu$ of $f$, $x^{\ell} \leq \mu \leq x^{u}$.
\end{theorem}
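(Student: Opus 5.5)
The plan is to show that every fixed point $\mu$ of $f$ is a post-fixed point of $f^u$ and a pre-fixed point of $f^{\ell}$. The Knaster--Tarski characterizations in Theorem~\ref{thm:knaster-tarski} then place $\mu$ between $\lfp{f^{\ell}}$ and $\gfp{f^u}$. No iteration or chain argument is needed, only the envelope inequalities $f^{\ell} \leq f \leq f^u$ pointwise.

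First I record these inequalities. By definition $f^u(x) = \sup_{y \leq x} f(y)$, and $y = x$ is one of the indices in the supremum, so $f(x) \leq f^u(x)$. Dually, $f^{\ell}(x) = \inf_{y \geq x} f(y) \leq f(x)$ because $y = x$ is admissible. I also note that $f^u$ and $f^{\ell}$ are monotone: if $x \leq x'$, the index set $\{y \leq x\}$ is contained in $\{y \leq x'\}$, so the supremum can only grow. The infimum case is symmetric. Monotonicity is what allows Theorem~\ref{thm:knaster-tarski} to be applied, and completeness of $L$ ensures all of these suprema and infima exist.

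Now let $\mu = f(\mu)$. Then $\mu = f(\mu) \leq f^u(\mu)$, so $\mu$ lies in the set $\{x \in L \mid x \leq f^u(x)\}$. By Theorem~\ref{thm:knaster-tarski}, $\gfp{f^u} = \bigsqcup \{x \in L \mid x \leq f^u(x)\}$, hence $\mu \leq \gfp{f^u} = x^u$. Symmetrically, $f^{\ell}(\mu) \leq f(\mu) = \mu$, so $\mu$ is a pre-fixed point of $f^{\ell}$. Since $\lfp{f^{\ell}} = \bigsqcap\{x \in L \mid f^{\ell}(x) \leq x\}$, we get $x^{\ell} \leq \mu$.

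There is no real obstacle here. The only points needing care are the monotonicity of the envelopes, which is needed to invoke Knaster--Tarski, and keeping the directions straight: the greatest fixed point is used with post-fixed points and the least fixed point with pre-fixed points. If $f$ has no fixed point the statement is vacuous, so the paper's standing assumption that one exists is not needed for the proof.
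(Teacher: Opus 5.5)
Your proof is correct and follows the paper's argument exactly. A fixed point satisfies $\mu = f(\mu) \leq f^u(\mu)$ and $f^{\ell}(\mu) \leq f(\mu) = \mu$, so the Knaster--Tarski characterizations of $\gfp{f^u}$ and $\lfp{f^{\ell}}$ give $x^{\ell} \leq \mu \leq x^u$. Your explicit checks of the pointwise envelope inequalities and of the monotonicity of $f^u$ and $f^{\ell}$ fill in details the paper leaves implicit; monotonicity is what lets those joins and meets be identified with the extremal fixed points.
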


\begin{proof}
    Let $\mu$ be a fixed point of $f$. Then $\mu = f(\mu) \leq f^u(\mu)$, so $\mu \leq x^u$ by Knaster-Tarski. Similarly, $\mu = f(\mu) \geq f^{\ell}(\mu)$, so $x^{\ell} \leq \mu$ by Knaster-Tarski.
\end{proof}

Indeed, these are the \emph{best} achievable bounds in the sense that they are the tightest bounds obtainable by the tightest sound interval transformer for $f$ \cite{denecker2004}. Namely, $[x^{\ell},x^u]$ is AFT's well-founded approximation for $f$.

\begin{example}
    \label{example:step-2}
    We pick up where we left off in Example \ref{example:step-1}. By applying Kleene iteration on $f^u$ and $f^{\ell}$, we find their greatest and least fixed points respectively.
    \begin{align*}
        f^u(1,1) &= f(1,0) = (1,0.5)\\
        f^u(1,0.5) &= f(1,0) = (1,0.5)
    \end{align*}
    so $x^u = (1,0.5)$. To find $x^{\ell}$, we apply Kleene iteration on $f^{\ell}$:
    \begin{align*}
        f^{\ell}(0,0) = f(0,1) = (0,0)
    \end{align*}
    so $x^{\ell} = (0,0)$. We have reached the initial bound $$(0,0) \leq \mu \leq (1,0.5)$$ and as we shall soon see, this bound can be improved by refining $f^u$ and $f^{\ell}$.
\end{example}

\section{Iterative Approximation Refinement}
\label{sec:soundness-thms}

We can view an iterative refinement of the monotone envelopes as repeated application of a ``refinement'' function. That is, a function $F$ over intervals (or \emph{bounds}; they will be used interchangeably) such that $$F([\xubar{k-1},\xbar{k-1}]) = [\xubar{k},\xbar{k}]$$

We can generalize this function to the set of all intervals, not just those generated by $[\bot,\top]$ to show a useful result that states that if we begin at an arbitrary bound approximating an optimal solution, then we will stay in a sound state after repeated application of $F$.

We first define the function $F$ to be a function over the set of bounds $\B = \{\bbound \mid x^{\ell},x^u \in L \wedge x^{\ell} \leq x^u\}$ to be
\[
F(B) = [\lfpB{f_B^{\ell}}, \gfpB{f_B^u}]
\]
where $f_B^{\ell}$ and $f_B^u$ are defined by
\begin{align*}
    f_B^{\ell}(x) &= \inf_{x \leq y\leq x^u} f(y)\\
    f_B^u(x) &= \sup_{x^{\ell} \leq y \leq x} f(y)
\end{align*}
and
\begin{align*}
    \lfpB{g} &= \bigsqcap \{x \in L \mid x \geq x^{\ell} \land x \geq g(x)\}\\
    \gfpB{g} &= \bigsqcup \{x \in L \mid x \leq x^u \land x \leq g(x)\}
\end{align*}
are the Knaster-Tarski least/greatest fixed point definitions over a \emph{restricted} domain. We emphasize that the meet and join is in the lattice $L$, rather the interval's sub-lattice meet/join (for the cases of empty sets). Observe that $f^{\ell}_B$ and $f^u_B$ are monotone over $B$. They are a generalization of the earlier monotone envelopes. It is not hard to show that for the monotone functions $\varphi^{\ell}_B(x) = x^{\ell} \sqcup f^{\ell}_B(x)$ and $\varphi^u_B(x) = x^u \sqcap f^u_B(x)$, $\lfp{\varphi^{\ell}_B} = \lfpB{f^{\ell}_B}$ and $\gfp{\varphi^u_B} = \gfpB{f^u_B}$. Thus computing $\lfpB{f^{\ell}_B}$ and $\gfpB{f^u_B}$ reduces to computing the least/greatest fixed point of monotone functions.

\begin{example}
    Building on Examples \ref{example:step-1} and \ref{example:step-2}, we define the refined functions $\fbar{2}$ and $\fubar{2}$ ($f^u_B$ and $f^{\ell}_B$ for $B = [x^{\ell}_1,x^u_1]$). In Example \ref{example:step-2} we determined that $\xbar{1} = (1,0.5)$ and $\xubar{1} = (0,0)$. This gives
    \begin{align*}
        \fbar{2}(x_1,x_2) &= \sup_{(0,0)\leq (y_1,y_2) \leq (x_1,x_2)} f(y_1,y_2) = f(x_1,0)\\
        \fubar{2}(x_1,x_2) &= \inf_{(x_1,x_2) \leq (y_1,y_2) \leq (1,0.5)} f(y_1,y_2) = f(x_1,0.5)
    \end{align*}
    Here we observe that $\fbar{2}$ didn't improve upon $\fbar{1}$ since $\xubar{1}$ didn't improve upon $\xubar{0}$. On the other hand, the dual approximation $\fubar{2}$ improved since $\xbar{1}$ improved upon $\xbar{0}$.

    We can now look for $\xbar{2}$ and $\xubar{2}$ given the definitions of $\fbar{2}$ and $\fubar{2}$. The least/greatest fixed points can be found by using Kleene iteration on $\varphi^{\ell}_2$ and $\varphi^u_2$ as described earlier, but in this example the result is no different. Since $\fbar{2} = \fbar{1}$, we have $\xbar{2} = \xbar{1} = (1,0.5)$. For $\fubar{2}$ we find:
    \begin{align*}
        \fubar{2}(0,0) &= f(0,0.5) = (0.75, 0)\\
        \fubar{2}(0.75,0) &= f(0.75,0.5) = (0.75,0.375)\\
        \fubar{2}(0.75,0.375) &= f(0.75,0.5) = (0.75,0.375)
    \end{align*}
    and we see that the lower bound $\xubar{1} = (0,0)$ has been improved to $\xubar{2} = (0.75,0.375)$. This improvement was a result of the dual bound $\xbar{1}$ tightening the domain of $\fubar{2}$. Alas, we now see that our approximating bound for the fixed point has improved to
    \[
    (0.75,0.375) \leq \mu \leq (1,0.5)
    \]
\end{example}

In terms of AFT, $F$ is comparable to the stable operator for the ultimate approximation (which we will call the DMT operator after Denecker, Marek, and Truszczy\'{n}ski) \cite{denecker2004} with a small change: $F$ can map a valid bound to an invalid bound (not in $\B$). This difference comes from the fact that the stable operator \cite{denecker2004} takes the least/greatest fixed point of the monotone envelopes over the whole domain, while we decide to restrict the domain. At first glance, this change seems harmful since it takes away some of the nice properties enjoyed in AFT. We do this intentionally to introduce potentially \emph{bad} behavior when there is \emph{no} fixed point in the interval. This will allow us to sometimes determine when an interval is unsound, which will be useful for the pruning step of our later algorithm.

In particular, it is not always the case that $F(B) \in \B$ for an interval $B \in \B$. With the definition of $F$, this is only guaranteed to hold for some intervals. As motivated earlier, we are interested in bounds that are \emph{sound} in the same sense as we saw before.

\begin{definition}
    A bound $B\in \B$ is \emph{sound} (for $f$) if it contains any $\mu \in \fix{f}$.
\end{definition}

This definition states that a bound is sound if it contains some solution. As we will see, this keeps the solution contained after each iteration. We first ensure that $F(B)$ is indeed sound and contained in $\B$ for all sound $B \in \B$. For convenience, we will let $\B_{sound} = \{B \in \B \mid \text{$B$ is sound}\}$.

\begin{lemma}
    \label{lem:B-sound-F(B)-sound}
    For all $B \in \B_{sound}$, $F(B) \in \B_{sound}$.
\end{lemma}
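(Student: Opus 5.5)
The plan is to fix a sound bound $B = [x^{\ell}, x^u] \in \B_{sound}$ together with some $\mu \in \fix{f}$ with $x^{\ell} \leq \mu \leq x^u$. Writing $F(B) = [y^{\ell}, y^u]$ with $y^{\ell} = \lfpB{f^{\ell}_B}$ and $y^u = \gfpB{f^u_B}$, I will show that $y^{\ell} \leq \mu \leq y^u$. This gives both conclusions at once: $y^{\ell} \leq y^u$ by transitivity, so $F(B) \in \B$, and $\mu \in F(B)$, so $F(B)$ is sound. The argument mirrors the proof of the earlier theorem bounding all fixed points between $x^{\ell}$ and $x^u$, but uses the restricted Knaster–Tarski characterizations $\lfpB{\cdot}$ and $\gfpB{\cdot}$ instead of the global ones.

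For the upper bound, I will check that $\mu$ belongs to the set $\{x \in L \mid x \leq x^u \land x \leq f^u_B(x)\}$ whose join defines $\gfpB{f^u_B}$. The first condition $\mu \leq x^u$ holds by the choice of $\mu$. For the second, $y = \mu$ satisfies $x^{\ell} \leq y \leq \mu$, so it is an admissible index in the supremum defining $f^u_B(\mu)$. Hence $\mu = f(\mu) \leq \sup_{x^{\ell} \leq y \leq \mu} f(y) = f^u_B(\mu)$. Since $\mu$ lies in the set, it lies below its join, giving $\mu \leq y^u$.

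The lower bound is dual. We have $\mu \geq x^{\ell}$, and since $\mu \leq y = \mu \leq x^u$ is admissible in the infimum defining $f^{\ell}_B(\mu)$, we get $\mu = f(\mu) \geq f^{\ell}_B(\mu)$. So $\mu \in \{x \in L \mid x \geq x^{\ell} \land x \geq f^{\ell}_B(x)\}$, and therefore $y^{\ell} = \lfpB{f^{\ell}_B} \leq \mu$.

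I do not expect a serious obstacle. The only point needing care is that $\lfpB{\cdot}$ and $\gfpB{\cdot}$ are meets and joins taken in $L$ and are not assumed to be fixed points in general. The argument never needs them to be fixed points or to lie in $B$: it uses only membership of $\mu$ in the defining sets. This is exactly the step where soundness of $B$ matters. Without a fixed point $\mu$ in the interval, the index ranges in the suprema and infima need not contain $\mu$, the defining sets could fail to contain any common witness, and $F(B)$ could fall outside $\B$, which is the intended ``bad'' behavior noted before the lemma.
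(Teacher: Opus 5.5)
Your proof is correct and follows the paper's argument. You pick a fixed point $\mu\in B$, note that $f^{\ell}_B(\mu)\le \mu = f(\mu)\le f^u_B(\mu)$ because $y=\mu$ is admissible in both the infimum and the supremum, and conclude $\lfpB{f^{\ell}_B}\le\mu\le\gfpB{f^u_B}$ from $\mu$ belonging to the defining sets; you are only slightly more explicit than the paper in noting that this one chain gives both $F(B)\in\B$ and the soundness of $F(B)$.
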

\begin{proof}
    Since $B$ is sound, there exists a fixed point $\mu \in B$ of $f$. 
    Observe that $\mu = f(\mu) \leq f^u_B(\mu)$ and $\mu = f(\mu) \geq f^{\ell}_B(\mu)$ since $\mu \in B$ (and $f^{\ell}_B \leq f \leq f^u_B$ over $B$), so $\lfpB{f^{\ell}_B} \leq \mu \leq \gfpB{f^u_B}$ by definition.
\end{proof}

We now show that $F$ can only shrink bounds (provided they are sound), and directly after we show that if a bound is a sound approximation of any optimal solution, then it remains a sound approximation after applying $F$.

\begin{lemma}
    \label{lem:F-reductive}
    For all $B \in \B_{sound}$, $F(B) \subseteq B$.
\end{lemma}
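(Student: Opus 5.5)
The plan is to reduce the containment $F(B) \subseteq B$ to two inequalities. Write $B = [x^{\ell}, x^u]$ and $F(B) = [y^{\ell}, y^u]$, where $y^{\ell} = \lfpB{f^{\ell}_B}$ and $y^u = \gfpB{f^u_B}$. Lemma \ref{lem:B-sound-F(B)-sound} already gives $F(B) \in \B_{sound}$, so in particular $y^{\ell} \leq y^u$. It therefore suffices to show $x^{\ell} \leq y^{\ell}$ and $y^u \leq x^u$; together these place the interval $[y^{\ell}, y^u]$ inside $[x^{\ell}, x^u]$.

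Both inequalities should follow directly from the restricted Knaster-Tarski definitions. For the lower endpoint, $y^{\ell}$ is the meet in $L$ of the set $S^{\ell} = \{x \in L \mid x \geq x^{\ell} \land x \geq f^{\ell}_B(x)\}$. Every element of $S^{\ell}$ lies above $x^{\ell}$, so $x^{\ell}$ is a lower bound of $S^{\ell}$, and hence $x^{\ell} \leq \bigsqcap S^{\ell} = y^{\ell}$. Dually, $y^u$ is the join of $S^u = \{x \in L \mid x \leq x^u \land x \leq f^u_B(x)\}$. Every element of this set lies below $x^u$, so $x^u$ is an upper bound of $S^u$, which gives $y^u \leq x^u$.

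The only delicate point is the convention, which the paper stresses, that these meets and joins are taken in $L$ rather than in the sub-lattice $B$. If $S^{\ell}$ were empty, its meet would be $\top_L$, and $x^{\ell} \leq y^{\ell}$ would still hold. The real danger is that the resulting interval might fail to be a valid bound. Soundness is exactly what rules this out. By the argument of Lemma \ref{lem:B-sound-F(B)-sound}, the fixed point $\mu \in B$ belongs to both $S^{\ell}$ and $S^u$, so neither set is empty, and we get $y^{\ell} \leq \mu \leq y^u$. This is where the hypothesis $B \in \B_{sound}$ is used. Beyond that, I expect no real obstacle: the argument is just the observation that a lower bound of a set is below its meet, and an upper bound is above its join.
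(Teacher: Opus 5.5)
Your proof is correct and follows essentially the same route as the paper. You reduce to $x^{\ell} \leq \lfpB{f^{\ell}_B}$ and $\gfpB{f^u_B} \leq x^u$, and each holds because $x^{\ell}$ (resp.\ $x^u$) bounds every element of the set whose meet (resp.\ join) is taken. Your observation that soundness is needed only to make $F(B)$ a valid interval is a slightly sharper reading than the paper's appeal to non-emptiness. As Lemma \ref{lem:combine-lems} notes, the two inequalities hold even when those sets are empty.
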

\begin{proof}
    Let $B = \bbound$. It then suffices to show that $\lfpB{f^{\ell}_B} \geq x^{\ell}$ and $\gfpB{f^u_B} \leq x^u$. Since $B$ is sound, $\lfpB{f^{\ell}_B}$ is a meet of a non-empty set of elements with $x^{\ell}$ as a lower bound, so $x^{\ell} \leq \lfpB{f^{\ell}_B}$. A dual argument follows to show $\gfpB{f^u_B} \leq x^u$.
\end{proof}

These two lemmas will serve to be very useful in Section \ref{sec:bnb}. To illustrate how Lemma \ref{lem:B-sound-F(B)-sound} can be useful, the following example shows that if a bound is not sound, applying $F$ may take it to an invalid interval.

\begin{example}
    Consider the function $f$ from Example \ref{example:step-1}, starting from the bound $B = [(0,0),(0.2,0.2)]$, which contains no fixed points. Applying $F$ on this bound gives
    $$F([(0,0),(0.2,0.2)]) = [\lfp{\varphi^{\ell}_B},\gfp{\varphi^u_B}]$$ and Kleene iteration on these functions gives $\lfp{\varphi^{\ell}_B} = (1,1)$ and $\gfp{\varphi^u_B} = (0.2,0.1)$, and $[(1,1),(0.2,0.1)] \notin \B$, so by Lemma \ref{lem:B-sound-F(B)-sound} it is necessarily true that there are no fixed points in $[(0,0),(0.2,0.2)]$.
\end{example}

\begin{lemma}
    \label{lem:combine-lems}
    For all $B \in \B$, if $F(B) \in \B$ then $F(B) \subseteq B$.
\end{lemma}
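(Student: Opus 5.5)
The plan is to show directly from the definitions that the two endpoints of $F(B)$ always move inward, i.e.\ $x^{\ell} \leq \lfpB{f^{\ell}_B}$ and $\gfpB{f^u_B} \leq x^u$, \emph{without} using soundness. The hypothesis $F(B) \in \B$ is then used only to guarantee that $F(B)$ is a genuine interval, i.e.\ $\lfpB{f^{\ell}_B} \leq \gfpB{f^u_B}$. Together these give $F(B) \subseteq B$. The argument mirrors Lemma \ref{lem:F-reductive}. The difference is that there the non-emptiness of the defining sets came from a fixed point $\mu \in B$; here I will handle the possibly empty case explicitly instead.

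Write $B = \bbound$, $S^{\ell} = \{x \in L \mid x \geq x^{\ell} \land x \geq f^{\ell}_B(x)\}$ and $S^u = \{x \in L \mid x \leq x^u \land x \leq f^u_B(x)\}$. By definition, $\lfpB{f^{\ell}_B} = \bigsqcap S^{\ell}$ and $\gfpB{f^u_B} = \bigsqcup S^u$, with meet and join taken in $L$.

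\textbf{Lower endpoint.} Every element of $S^{\ell}$ lies above $x^{\ell}$, so $x^{\ell}$ is a lower bound of $S^{\ell}$, and hence $x^{\ell} \leq \bigsqcap S^{\ell}$ by the definition of the greatest lower bound. This is valid even if $S^{\ell} = \emptyset$: then $\bigsqcap S^{\ell} = \top \geq x^{\ell}$. In fact $S^{\ell}$ is never empty, since $\top \in S^{\ell}$.

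\textbf{Upper endpoint.} Dually, $x^u$ is an upper bound of $S^u$, so $\bigsqcup S^u \leq x^u$. In the empty case, $\bigsqcup \emptyset = \bot \leq x^u$.

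\textbf{Conclusion.} We now have $x^{\ell} \leq \lfpB{f^{\ell}_B}$ and $\gfpB{f^u_B} \leq x^u$. Since $F(B) \in \B$, we also have $\lfpB{f^{\ell}_B} \leq \gfpB{f^u_B}$. Any $z \in F(B)$ therefore satisfies $x^{\ell} \leq \lfpB{f^{\ell}_B} \leq z \leq \gfpB{f^u_B} \leq x^u$, so $z \in B$.

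The only delicate point is the empty meet/join convention, which the paper flags by insisting the meet and join are taken in $L$ rather than in the interval sublattice. Checking that $\top$ and $\bot$ still lie on the correct side of $x^{\ell}$ and $x^u$ is trivial. Everything else is immediate from the universal properties of $\bigsqcap$ and $\bigsqcup$.
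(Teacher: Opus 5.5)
Your proof is correct and is essentially the paper's argument. Both rest on $x^{\ell}$ being a lower bound of the set whose meet defines $\lfpB{f^{\ell}_B}$, and dually on $x^u$ being an upper bound of the set whose join defines $\gfpB{f^u_B}$, with the empty meet/join case covered; your side remark that $\top \in S^{\ell}$ (and dually $\bot \in S^u$) is also right, so the empty case the paper guards against never actually arises.
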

\begin{proof}
    Let $B = \bbound$ and observe that $x^{\ell}$ is a lower bound for $\lfpB{f^{\ell}_B}$ (even if it is a meet of an empty set) and $x^u$ is an upper bound for $\gfpB{f^u_B}$ (even if it is a join of an empty set), so $F(B) \subseteq B$.
\end{proof}

\begin{lemma}
    \label{lem:F-monotone}
    If $B_1,B_2 \in \B_{sound}$ and $B_1 \subseteq B_2$, then $F(B_1) \subseteq F(B_2)$.
\end{lemma}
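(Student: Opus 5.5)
The plan is to prove the two endpoint inequalities separately, by comparing the sets whose meet and join define $\lfpB{\cdot}$ and $\gfpB{\cdot}$. Write $B_1 = [a_1,b_1]$ and $B_2 = [a_2,b_2]$. Since $B_1 \subseteq B_2$ and both are non-empty intervals, we have $a_2 \leq a_1 \leq b_1 \leq b_2$. By Lemma \ref{lem:B-sound-F(B)-sound}, both $F(B_1)$ and $F(B_2)$ lie in $\B_{sound}$. So they are genuine intervals, and the containment $F(B_1) \subseteq F(B_2)$ reduces to two facts:
\[
\lfpS{f^{\ell}_{B_2}}{B_2} \leq \lfpS{f^{\ell}_{B_1}}{B_1}
\qquad\text{and}\qquad
\gfpS{f^u_{B_1}}{B_1} \leq \gfpS{f^u_{B_2}}{B_2}.
\]

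First I would compare the envelopes pointwise. For any $x \in L$, the index set $\{y \mid x \leq y \leq b_1\}$ is contained in $\{y \mid x \leq y \leq b_2\}$. An infimum over a larger set is smaller, so $f^{\ell}_{B_2}(x) \leq f^{\ell}_{B_1}(x)$ for all $x$. This also covers the degenerate case where the smaller index set is empty: then $f^{\ell}_{B_1}(x) = \top$ by the convention that meets are taken in $L$, and the inequality holds trivially. Dually, $\{y \mid a_1 \leq y \leq x\} \subseteq \{y \mid a_2 \leq y \leq x\}$ gives $f^u_{B_1}(x) \leq f^u_{B_2}(x)$ for all $x$, with empty joins equal to $\bot$.

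Next I would compare the defining sets. Suppose $x$ belongs to the set defining $\lfpS{f^{\ell}_{B_1}}{B_1}$, that is, $x \geq a_1$ and $x \geq f^{\ell}_{B_1}(x)$. Then $x \geq a_1 \geq a_2$, and $x \geq f^{\ell}_{B_1}(x) \geq f^{\ell}_{B_2}(x)$. Hence $x$ also belongs to the set defining $\lfpS{f^{\ell}_{B_2}}{B_2}$. The meet of a superset is below the meet of a subset, which gives the first inequality. The dual argument gives the second: any $x$ with $x \leq b_1$ and $x \leq f^u_{B_1}(x)$ satisfies $x \leq b_2$ and $x \leq f^u_{B_2}(x)$, so the $B_1$-set is contained in the $B_2$-set and its join is smaller.

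There is no real obstacle here. The only point needing care is the edge behaviour of empty meets and joins, both in the envelopes and in the fixed-point sets; the set-inclusion argument handles it uniformly. Soundness is used only to ensure, via Lemma \ref{lem:B-sound-F(B)-sound}, that $F(B_1)$ and $F(B_2)$ are valid intervals, so that "$\subseteq$" means endpoint comparison. In fact, the endpoint inequalities themselves hold for all $B_1 \subseteq B_2$ in $\B$.
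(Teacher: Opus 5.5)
Your proof is correct and is essentially the paper's argument: you compare $f^{\ell}_{B_1}, f^{\ell}_{B_2}$ (and dually $f^u_{B_1}, f^u_{B_2}$) pointwise via infima/suprema over nested index sets, deduce that the set defining the restricted least (greatest) fixed point for $B_1$ is contained in the one for $B_2$, and then compare meets (joins). Your version is slightly more careful than the paper's: you check the constraint $x \geq a_1 \geq a_2$ on the whole defining set rather than only on elements of $B_1$, and you correctly note that soundness is needed only to guarantee $F(B_1), F(B_2) \in \B$.
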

\begin{proof}
    Let $B_1,B_2 \in \B$ such that $B_1 \subseteq B_2$ and let $x \in B_1$ such that $x \geq f^{\ell}_{B_1}(x)$ (existence guaranteed by soundness of $B_1$). Then, since $B_1 \subseteq B_2$, $f^{\ell}_{B_1} \geq f^{\ell}_{B_2}$ over $B_1$ (meet of a smaller set of elements is larger), so $$x \geq f^{\ell}_{B_1}(x) \geq f^{\ell}_{B_2}(x)$$ and the set of pre-fixed points of $B_1$ is a subset of the set of pre-fixed points of $B_2$. Thus, $$\lfpS{f^{\ell}_{B_1}}{B_1} \geq \lfpS{f^{\ell}_{B_2}}{B_2}$$ Dually, we observe that $\gfpS{f^u_{B_1}}{B_1} \leq \gfpS{f^u_{B_2}}{B_2}$ holds. Finally, this shows that $F(B_1) \subseteq F(B_2)$.
\end{proof}

The following theorem essentially restates known monotone iteration convergence results. We state the theorem to emphasize that by starting at a bound that is not generated by $\{F^k([\bot,\top])\}$, we do not end in a bound worse than $X_{\infty} = \inf_{k \geq 0} F^k([\bot,\top])$ (the bound approached by starting at the interval of the entire lattice).

\begin{theorem}
    Let $B \in \B_{sound}$ and let $B_{\infty} = \inf_{k \geq 0} F^k(B)$. Then, $B_{\infty} \subseteq X_{\infty}$.
\end{theorem}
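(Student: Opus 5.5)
Since $B$ is sound, $B \subseteq [\bot,\top]$, and $[\bot,\top]$ is itself sound (it contains the fixed point of $f$ that we assume exists), so the whole argument will compare the two iteration sequences term by term. The plan is an induction on $k$ showing $F^k(B) \subseteq F^k([\bot,\top])$ for all $k \geq 0$, followed by passing to the infimum (intersection of intervals) over $k$.

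First I record that both sequences stay in $\B_{sound}$. By Lemma \ref{lem:B-sound-F(B)-sound}, $F$ maps $\B_{sound}$ into $\B_{sound}$, so every $F^k(B)$ and every $F^k([\bot,\top])$ is a sound bound. By Lemma \ref{lem:F-reductive}, each sequence is descending with respect to $\subseteq$. Hence $B_\infty$ and $X_\infty$ are well defined as the infimum (meet of lower endpoints' join, i.e.\ $[\bigsqcup_k x^{\ell}_k, \bigsqcap_k x^u_k]$) of a descending chain of intervals. I would state explicitly that $\inf$ here means the interval $[\sup_k \xubar{k}, \inf_k \xbar{k}]$, and note that it again contains every fixed point that all the $F^k(B)$ contain. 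In particular it is a genuine element of $\B$.

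The induction itself is the core step. The base case $k=0$ is $B \subseteq [\bot,\top]$. For the step, assume $F^k(B) \subseteq F^k([\bot,\top])$ with both sides in $\B_{sound}$. Then Lemma \ref{lem:F-monotone} applies directly and gives $F^{k+1}(B) \subseteq F^{k+1}([\bot,\top])$. The only care needed is that Lemma \ref{lem:F-monotone} requires soundness of both arguments, which is exactly what the previous paragraph supplies.

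Finally, I pass to the limit. Writing $F^k(B) = [a_k,b_k]$ and $F^k([\bot,\top]) = [c_k,d_k]$, the inclusions give $c_k \leq a_k$ and $b_k \leq d_k$ for every $k$. Therefore $\sup_k c_k \leq \sup_k a_k$ and $\inf_k b_k \leq \inf_k d_k$, which is precisely $B_\infty \subseteq X_\infty$.

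I expect the main obstacle to be the bookkeeping around what ``$\inf$'' of intervals means, and making sure the limit is well defined in $\B$ (lower endpoint $\leq$ upper endpoint). Soundness handles the latter, since a common fixed point $\mu$ satisfies $a_k \leq \mu \leq b_k$ for all $k$. Transfinite iteration is not needed, because the statement only takes the infimum over $k \geq 0$.
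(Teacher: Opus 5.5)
Your proposal is correct and follows the paper's own proof: you induct on $k$ using Lemma~\ref{lem:F-monotone}, with Lemmas~\ref{lem:B-sound-F(B)-sound} and~\ref{lem:F-reductive} keeping the iterates sound and descending, to get $F^k(B) \subseteq F^k([\bot,\top])$, and then pass to the infimum. The only differences are cosmetic. You pass to the limit through the endpoints $\sup_k$ and $\inf_k$, whereas the paper observes that $B_\infty$ is a lower bound of the chain $\{F^k([\bot,\top])\}$. You also make explicit the soundness bookkeeping and the well-definedness of the infimum interval, which the paper leaves implicit.
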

\begin{proof}
    Since $\{F^k(B)\}_{k \geq 0}$ is a descending chain by Lemma \ref{lem:F-reductive}, $B_{\infty}$ is the greatest lower bound of that chain. Similarly, $X_{\infty}$ is the greatest lower bound of the descending chain $\{F^k([\bot,\top])\}_{k \geq 0}$. By monotonicity of $F$ (in the sound intervals, Lemma \ref{lem:F-monotone}) and $B \subseteq [\bot,\top]$, we then have $F^k(B) \subseteq F^k([\bot,\top])$ for all $k \geq 0$. Then for all $k \geq 0$, $B_{\infty} \subseteq F^k(B) \subseteq F^k([\bot,\top])$. Then, $B_{\infty}$ is a lower bound of the chain $\{F^k([\bot,\top])\}_{k \geq 0}$, so $B_{\infty} \subseteq X_{\infty}$.
\end{proof}

These results have a useful consequence: we may begin our process at any interval, and applying the iterative refinement process will always take us to an interval that is at least as good as $X_{\infty}$. Furthermore, if we begin at a sound interval, the iterative refinement process will end in a sound interval. This can be thought of as a soundness property of $F$.

This allows us to interleave other methods and heuristics to forcibly shrink the bound (when possible). These heuristics can be thought of as a ``jump-start'' for $F$, and the refinement process defined by $F$ can be thought of as a search space restriction for the heuristics. The jump-start is especially useful when the bound is stuck at a fixed point of $F$. Recall that at each step of refinement, the upper and lower bounds of a sound interval \emph{always} remain above and below the solutions previously contained. By forcibly pushing either the upper or the lower bound in such a way they no longer bound some solutions, but still bound at least one (to remain sound), we can effectively jump-start the refinement process so it is no longer stuck, and hope for a snowball effect where the bounds continue improving each other once again. This can be visually depicted in Figure \ref{fig:stuck_xl}, where $\bbound$ is stuck since it must bound all solutions. In Figure \ref{fig:unstuck_xl}, $x^{\ell}$ is forcibly advanced (using some heuristic or a domain-specific sound advancement), which enables $x^u$ to continue improving.

In the Section \ref{sec:bnb}, we take advantage of this observation to create an effective algorithm for narrowing in on exact fixed points. It also makes note of a second useful observation: Lemmas \ref{lem:B-sound-F(B)-sound} and \ref{lem:F-reductive} show that if $F^k(B)$ is not a valid interval or it is not a subset of $B$, then it \emph{must} be the case that $B$ is not sound. We will use this fact to prune the search space and maintain a more manageable set of search intervals.

\begin{figure}[ht]
\centering
\begin{subfigure}[t]{.4\textwidth}
  \centering
  \includegraphics[width=.9\linewidth]{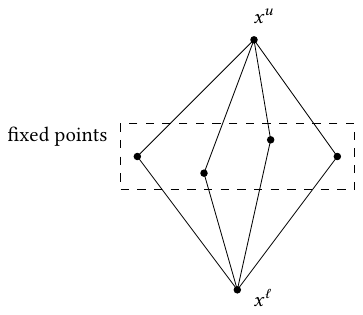}
  \caption{$x^{\ell}$ is stuck below many fixed points.}
  \label{fig:stuck_xl}
\end{subfigure}%
\hfill
\begin{subfigure}[t]{.47\textwidth}
  \centering
  \includegraphics[width=\linewidth]{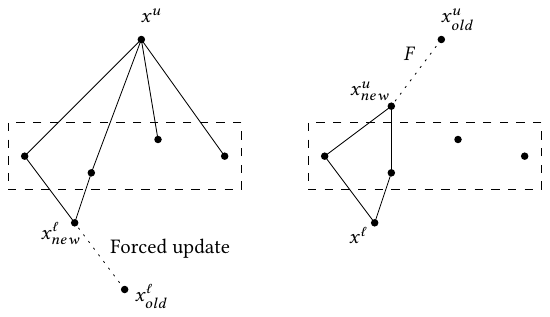}
  \caption{$x^{\ell}$ is forcibly advanced while remaining sound, allowing $x^u$ to progress.}
  \label{fig:unstuck_xl}
\end{subfigure}
\caption{Illustration of ``jump-starting'' a stuck refinement.}
\Description[Forcibly updating one bound makes the other bound improve]{In one diagram, the lower and upper bounds are stuck bounding many fixed points. In a second diagram, the lower bound is depicted being forcibly increased. In a third diagram, the upper bound is depicted being decreased by application of $F$.}
\end{figure}

\subsection{An Abstract Interpretation for the Approximation}

In many cases, computation of $f^{\ell}_B$ and $f^u_{B}$ is very expensive. It may require enumeration over an exponentially sized lattice (in the size of the problem input). While the examples in this paper are centered around instances where these functions are relatively cheap to compute, we present a new theorem, to the best of our knowledge, that enables an abstraction for sound non-monotone fixed point approximations. In this section, we will let $\leq_p$ be an ordering over pairs: $(x_1,y_1) \leq_p (x_2,y_2)$ iff $x_2 \leq x_1 \land y_1 \leq y_2$. This ordering represents the precision of an interval approximation. It is evident that for $B_1,B_2 \in \B$ represented as pairs of the lower and upper bounds, $B_1 \leq_p B_2$ iff $B_1 \subseteq B_2$. It can then be thought of as an extension of set inclusion to all pairs. We will think of $\B$ as being a subset of $L^2$ here.

We will refer to $F_f$ as the function $F$ constructed before but parameterized by the function $f$. Let $C$ and $A$ be complete lattices. Let $\alpha^u,\alpha^{\ell} : C \to A$ and $\gamma^u,\gamma^{\ell} : A \to C$ be monotone such that $\alpha^u \dashv \gamma^u$ and $\gamma^{\ell} \dashv \alpha^{\ell}$ are Galois insertions. We will also let, for any $B = \bbound \in C^2$, $\alpha(B) = [\alpha^{\ell}(x^{\ell}),\alpha^u(x^u)] \in A^2$ (and similarly for $\gamma$). Note that, as shown in the proof of the following theorem, $\alpha^{\ell} \leq \alpha^u$.

\begin{theorem}
    Let $f : C \to C$ and $f^{\sharp},f^{\flat} : A \to A$ satisfy $$f^{\flat} \circ \alpha^{\ell}\leq_A \alpha^{\ell} \circ f \qquad \alpha^u\circ f \leq_A f^{\sharp} \circ \alpha^u$$ and for all $B \in A^2$ let $F^{\sharp} : A^2 \to A^2$ be defined by $F^{\sharp}(B) = [\lfpB{(f^{\flat})^{\ell}_B}, \gfpB{(f^{\sharp})^u_B}]$. 
    Then for all $B \in C^2$, $$\alpha(F(B)) \leq_p F^{\sharp}(\alpha(B)).$$
\end{theorem}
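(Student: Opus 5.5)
The plan is to unfold $\leq_p$ into its two coordinates and prove each one separately. Write $B=\bbound\in C^2$ and $\alpha(B)=[a^{\ell},a^u]$ with $a^{\ell}=\alpha^{\ell}(x^{\ell})$ and $a^u=\alpha^u(x^u)$. The statement $\alpha(F(B))\leq_p F^{\sharp}(\alpha(B))$ then means two inequalities:
\[
\alpha^u(\gfpB{f^u_B})\leq_A \gfpS{(f^{\sharp})^u_{\alpha(B)}}{\alpha(B)}
\qquad\text{and}\qquad
\lfpS{(f^{\flat})^{\ell}_{\alpha(B)}}{\alpha(B)}\leq_A \alpha^{\ell}(\lfpB{f^{\ell}_B}).
\]
I will prove the upper inequality in detail. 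The lower one is its order dual: $\alpha^{\ell}$ is a right adjoint, so it preserves arbitrary meets, and the hypothesis on $f^{\flat}$ plays the role of the hypothesis on $f^{\sharp}$.

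\emph{Upper coordinate.} By definition, $\gfpB{f^u_B}=\bigsqcup S$, where $S=\{x\in C \mid x\leq x^u,\ x\leq f^u_B(x)\}$. Since $\alpha^u$ is a left adjoint, it preserves all joins, including the empty join ($\alpha^u(\bot)=\bot$). Hence $\alpha^u(\gfpB{f^u_B})=\bigsqcup_{x\in S}\alpha^u(x)$. It therefore suffices to show that $\alpha^u(x)$ lies in the abstract post-fixed set $\{a\mid a\leq a^u,\ a\leq (f^{\sharp})^u_{\alpha(B)}(a)\}$ for every $x\in S$.

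The first condition is immediate: $x\leq x^u$ gives $\alpha^u(x)\leq\alpha^u(x^u)=a^u$ by monotonicity. For the second condition, I will use monotonicity, join preservation and the hypothesis $\alpha^u\circ f\leq f^{\sharp}\circ\alpha^u$ to get
\[
\alpha^u(x)\leq \alpha^u\Bigl(\bigsqcup_{x^{\ell}\leq y\leq x} f(y)\Bigr)=\bigsqcup_{x^{\ell}\leq y\leq x}\alpha^u(f(y))\leq \bigsqcup_{x^{\ell}\leq y\leq x} f^{\sharp}(\alpha^u(y)).
\]
It remains to check that each $\alpha^u(y)$ is an admissible index in $(f^{\sharp})^u_{\alpha(B)}(\alpha^u(x))=\bigsqcup_{a^{\ell}\leq z\leq \alpha^u(x)}f^{\sharp}(z)$. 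That is, I need $a^{\ell}\leq\alpha^u(y)\leq\alpha^u(x)$. The right inequality is monotonicity. The left one follows from $\alpha^{\ell}(x^{\ell})\leq\alpha^{\ell}(y)\leq\alpha^u(y)$, which uses the pointwise inequality $\alpha^{\ell}\leq\alpha^u$. Once that is in place, each term is bounded by the abstract envelope and the upper coordinate follows.

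\emph{Lower coordinate.} Here $\lfpB{f^{\ell}_B}=\bigsqcap T$, where $T$ is the set of restricted pre-fixed points, and $\alpha^{\ell}$ preserves meets (including $\alpha^{\ell}(\top)=\top$). The same chain of inequalities, now with $f^{\flat}\circ\alpha^{\ell}\leq\alpha^{\ell}\circ f$, shows that each $\alpha^{\ell}(x)$ for $x\in T$ is an abstract restricted pre-fixed point. The index check this time is $\alpha^{\ell}(x)\leq\alpha^{\ell}(y)\leq\alpha^u(y)\leq\alpha^u(x^u)=a^u$. The abstract meet is then below $\bigsqcap_{x\in T}\alpha^{\ell}(x)=\alpha^{\ell}(\lfpB{f^{\ell}_B})$, as required.

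\emph{Main obstacle.} The step I expect to be hardest is establishing $\alpha^{\ell}\leq\alpha^u$ pointwise, which the paper says is shown inside this proof. The adjunctions give $\gamma^{\ell}\alpha^{\ell}\leq\mathrm{id}_C\leq\gamma^u\alpha^u$, and the insertions give $\alpha^u\gamma^u=\mathrm{id}_A=\alpha^{\ell}\gamma^{\ell}$. My first attempt will be to combine these: push $c\leq\gamma^u\alpha^u(c)$ through $\alpha^{\ell}$, and then compare $\alpha^{\ell}\gamma^u$ with $\alpha^u\gamma^u=\mathrm{id}$, hoping that the two insertions force $\alpha^{\ell}\gamma^u\leq\mathrm{id}_A$. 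If this comparison does not follow from the stated hypotheses alone, I would isolate $\alpha^{\ell}\leq\alpha^u$ (equivalently $\alpha^{\ell}\gamma^u\leq\mathrm{id}_A$) as the exact extra condition the argument needs. The rest of the proof uses only this inequality together with join/meet preservation.
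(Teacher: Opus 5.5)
\emph{The gap you flagged at $\alpha^{\ell}\le\alpha^u$ is real and cannot be closed from the stated hypotheses; in fact the theorem is false without it.}

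Your proof takes the same route as the paper's. The paper passes through $\lfp{\varphi^{\ell}_B}$ and $\gfp{\varphi^u_B}$, but otherwise does what you do: it uses meet/join preservation, maps restricted pre-/post-fixed points to abstract ones, and reindexes into the abstract envelope. Every step of yours is correct except the one you flagged. Nothing in the hypotheses relates the pair $(\gamma^{\ell},\alpha^{\ell})$ to the pair $(\alpha^u,\gamma^u)$, so $\alpha^{\ell}\le\alpha^u$ does not follow. Your planned derivation of $\alpha^{\ell}\gamma^u\le\mathrm{id}_A$ will therefore fail. The paper's one-line justification fails too. The equivalence $\alpha^{\ell}(x)\le\alpha^u(x)\Leftrightarrow x\le\gamma(\alpha^u(x))$ would need $\alpha^{\ell}$ to be a \emph{left} adjoint, but $\gamma^{\ell}\dashv\alpha^{\ell}$ makes it a right adjoint.

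Here is a counterexample to the theorem itself. Let $C=\{0<1<2\}$ and $A=\{0<1\}$.
\begin{itemize}
\item Let $\alpha^u$ send $0,1,2$ to $0,0,1$, with $\gamma^u(0)=1$ and $\gamma^u(1)=2$.
\item Let $\alpha^{\ell}$ send $0,1,2$ to $0,1,1$, with $\gamma^{\ell}(0)=0$ and $\gamma^{\ell}(1)=1$.
\end{itemize}
Both are Galois insertions of the required shape, yet $\alpha^{\ell}(1)=1>0=\alpha^u(1)$. Take $f(0)=0$, $f(1)=2$, $f(2)=0$. Take $f^{\sharp}(0)=1$, $f^{\sharp}(1)=0$, which is exactly $\alpha^u f\gamma^u$, and take $f^{\flat}\equiv 0$. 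Both hypotheses hold.

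Now let $B=[1,2]$. Then $f^u_B(1)=f^u_B(2)=2$, so $\gfpB{f^u_B}=2$ and $\alpha^u(2)=1$. But $\alpha(B)=[1,1]$ and $(f^{\sharp})^u_{\alpha(B)}(1)=f^{\sharp}(1)=0$, so the abstract restricted gfp is $0$. The upper coordinate of $\alpha(F(B))\le_p F^{\sharp}(\alpha(B))$ therefore fails.

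This is precisely where your index check breaks. The concrete witness $y=x^{\ell}=1$ abstracts to $\alpha^u(1)=0$. That lies below the abstract lower end $\alpha^{\ell}(x^{\ell})=1$, so it is excluded from the abstract envelope.

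The same failure occurs whenever $\alpha^{\ell}(c_0)\not\le\alpha^u(c_0)$ for some $c_0$. Take $B=[c_0,\top_C]$. Set $f(c_0)=\top_C$ and $f=\bot_C$ elsewhere. Set $f^{\sharp}=\bot_A$ on $\{z\mid z\ge\alpha^{\ell}(c_0)\}$ and $\top_A$ elsewhere, and $f^{\flat}\equiv\bot_A$.

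So your fallback is the right repair. You must add $\alpha^{\ell}\le\alpha^u$ as a hypothesis; as you note, it is equivalent to $\alpha^{\ell}\gamma^u\le\mathrm{id}_A$. It is exactly what the statement needs, and for a fixed $B$ your argument only uses it at $x^{\ell}$ and $x^u$. With that hypothesis your proof is complete and correct, including the empty-join and empty-meet cases.
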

\begin{proof}
    See appendix.
\end{proof}

This theorem gives a simple recipe for an abstract interpretation on the approximation operator $F$. The only requirement is to have a such $\alpha^{\ell},\gamma^{\ell}$ and $\alpha^u,\gamma^u$. The first pair can be thought of as an under-approximating abstraction while the second pair can be thought of as an over-approximating abstraction (in the standard abstract interpretation sense). Then, we may use the best transformers $f^{\flat} = \alpha^{\ell}f \gamma^{\ell}$ and $f^{\sharp} = \alpha^u f\gamma^u$, which are known to satisfy the inequalities required by the theorem. It is not hard to see that these best correct approximations (in the Cousot sense \cite{cousot1979}) also yield the best correct approximation $F^{\sharp}$ for $F$.

\begin{corollary}
    If $\alpha(F(B)) \in \B^A$, then $F^{\sharp}(\alpha(B)) \in \B^A$ $\alpha(F(B)) \subseteq F^{\sharp}(\alpha(B))$.
\end{corollary}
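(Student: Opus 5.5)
The corollary follows from the preceding theorem together with a short unpacking of what $\leq_p$ means on pairs. I read the statement as: if $\alpha(F(B)) \in \B^A$ (the valid bounds of $A$, i.e.\ pairs with lower component $\leq_A$ upper component), then $F^{\sharp}(\alpha(B)) \in \B^A$ and $\alpha(F(B)) \subseteq F^{\sharp}(\alpha(B))$.

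The plan has three steps.

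\textbf{Step 1: unpack the theorem.} Fix $B \in C^2$. Write $\alpha(F(B)) = [a^{\ell}, a^u]$ and $F^{\sharp}(\alpha(B)) = [b^{\ell}, b^u]$. The theorem gives $\alpha(F(B)) \leq_p F^{\sharp}(\alpha(B))$, which by definition of $\leq_p$ says
\[
b^{\ell} \leq_A a^{\ell} \qquad\text{and}\qquad a^u \leq_A b^u .
\]

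\textbf{Step 2: validity of $F^{\sharp}(\alpha(B))$.} The hypothesis $\alpha(F(B)) \in \B^A$ means $a^{\ell} \leq_A a^u$. Chaining with Step 1 gives
\[
b^{\ell} \leq_A a^{\ell} \leq_A a^u \leq_A b^u .
\]
Hence $b^{\ell} \leq_A b^u$, so $F^{\sharp}(\alpha(B)) \in \B^A$.

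\textbf{Step 3: containment.} Both pairs are now valid intervals in $\B^A$. The paper observed that on $\B$ the order $\leq_p$ coincides with set inclusion, and the same observation applies verbatim in $A$. The claimed inclusion $\alpha(F(B)) \subseteq F^{\sharp}(\alpha(B))$ is therefore exactly the $\leq_p$ relation from Step 1. Concretely, any $z$ with $a^{\ell} \leq z \leq a^u$ satisfies $b^{\ell} \leq z \leq b^u$ by transitivity.

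\textbf{Expected obstacle.} The only subtle point is making sure "$\subseteq$" is meaningful, i.e.\ that both pairs are genuine intervals. That is precisely why Step 2 is carried out before invoking the correspondence between $\leq_p$ and inclusion. I would also note that nothing beyond the theorem is needed; in particular, there is no use of the Galois insertions here.
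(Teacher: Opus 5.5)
Your proposal is correct and matches the paper's approach. The paper gives no separate proof and treats the corollary as an immediate consequence of the preceding theorem; unpacking $\leq_p$ into the chain $b^{\ell} \leq_A a^{\ell} \leq_A a^u \leq_A b^u$ gives both that $F^{\sharp}(\alpha(B))$ is a valid bound and the inclusion, and your reading of the garbled statement (a missing ``and'') is the right one.
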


\section{A Branch-and-Bound for Tightening Approximations}
\label{sec:bnb}

Thus far, the described iterative refinement process has been very safe: it progresses monotonically only to a provably sound bound at each step. As previously hinted, we now look at methods for speculatively advancing the bound rather than cautiously advancing. This is achieved using a branch-and-bound style algorithm that functions using iterative refinement and pruning when a bound is provably unsound. In the context of this algorithm, we will let $IR(B)$ be the repeated iteration of $F$ on $B$. This can be any number of iterations, whether it is to a fixed point or not. The more iterations, the tighter the resulting bound (if $B$ is sound, by Lemma \ref{lem:F-reductive}).

\paragraph{Refine}
Suppose we have a set of sound bounds $Bounds$. For each $B \in Bounds$, let $B_1,B_2$ be a decomposition of $B$ so that $B = B_1\cup B_2$ and each $B_i$ is a valid bound ($B_i \in \B$). Then, run the iterative refinement algorithm on each $B_i$ to get $IR(B_i)$. By Lemma \ref{lem:B-sound-F(B)-sound}, all the fixed points that were contained in the intervals of $Bounds$ are still contained after this operation. 

\paragraph{Prune}
Let $IR(B_i) = \bbound$. If $x^{\ell} \not\leq x^u$, then by Lemma \ref{lem:B-sound-F(B)-sound}, it \emph{must} be the case that there are no fixed points of $f$ contained in $B_i$ (that is, $B_i$ is not sound). Similarly, if $IR(B_i) \not\subseteq B_i$, Lemma \ref{lem:F-reductive} proves that $B_i$ is unsound. In these cases, we will prune $B_i$ from the search space $Bounds$. Note that these conditions are sufficient to determine that $B_i$ is not sound, but not necessary.

\paragraph{Merge}
We then merge any ``adjacent'' bounds in the following sense: for any two bounds $A_1 = [a_1^{\ell},a_1^u] \in \B, A_2 = [a_2^{\ell},a_2^u] \in \B$, $A_1$ and $A_2$ are \emph{adjacent} if $A_1 \cup A_2 \in \B$ (where $\cup$ is typical set union rather than a join of intervals). Finally, we update $Bounds$ by joining all adjacent bounds (after pruning and iterative refinement), resulting in a set of refined, non-adjacent bounds.

After repeating this process to a fixed point (or stopping earlier), we end with a set of smaller bounds such that every fixed point $\mu$ of $f$ is contained in some $B \in Bounds$. Observe that this algorithm utilizes the soundness theorems. An early-stopping technique can be added by checking if the endpoints of a bound is a fixed point after iterative refinement. In this case, we do not need to continue the search, and we can simply return the fixed point. If we are interested in \emph{all} fixed points or in an \emph{optimal} fixed point (by some norm), then we may store the discovered fixed point, update the current interval, and continue the search.

In Algorithm \ref{alg:bnb}, we will let $$\phi(B) \equiv IR(B) \in \B \land IR(B) \subseteq B$$
If $\neg \phi(B)$, then Lemmas \ref{lem:B-sound-F(B)-sound} and \ref{lem:F-reductive} prove that $B$ is not sound. If at any point during the $IR$ process, these conditions fail, then we may choose to stop it immediately to save time. We also use a procedure $\textsc{Decompose}(B)$. Assume that for any $B \in \B$, $\textsc{Decompose}(B)$ correctly returns a pair of adjacent non-trivial (when possible) bounds $B_1,B_2 \in \B$ such that $B_1\cup B_2 = B$. The if-statement's guard on line 8 ensures that if $B$ is a singleton, then it will be finalized. We later show how such a procedure can easily be devised in the case of a lattice of subsets.

\begin{algorithm}
\caption{Sound Branch-and-Bound For Tightening Converged Approximations}
\label{alg:bnb}
\begin{algorithmic}[1]
  \STATE \textbf{input:} \text{a sound bound $B_I$}
  \STATE $Bounds = \{IR(B_I)\}$
  \STATE $Final = \{\}$
  \WHILE{$Bounds \neq \emptyset$}
    \STATE $NewBounds \coloneqq \{\}$
    \FOR{$B \in Bounds$}
        \STATE $B_1,B_2 \coloneqq \textsc{Decompose}(B)$
        \IF{$B_1 = B$ \textbf{ or } $B_2 = B$}
            \STATE $Final \coloneqq Final \cup \{B\}$
            \STATE \textbf{continue}
        \ENDIF
        \STATE $Keep = \{IR(B_i) \mid i \in \{1,2\} \land \phi(B_i)\}$
        \IF{$Keep$ contains two adjacent bounds}
            \STATE $Keep \coloneqq \bigcup Keep$
        \ENDIF
        \IF{$Keep = \{B\}$}
            \STATE $Final \coloneqq Final \cup \{B\}$
        \ELSE
            \STATE $NewBounds \coloneqq NewBounds \cup Keep$
        \ENDIF
    \ENDFOR
    \STATE $Bounds \coloneqq NewBounds$
  \ENDWHILE
  \STATE \textbf{return} $Final$
\end{algorithmic}
\end{algorithm}

\begin{lemma}
    \label{lem:bnb-invars}
    The following are invariants of the outer while loop in Algorithm \ref{alg:bnb}:
    \begin{enumerate}
        \item $Bounds \subseteq \B$.
        \item $\forall B \in Bounds$, $B \subseteq B_I$.
        \item $\forall B_1,B_2 \in Bounds$ such that $B_1 \neq B_2$, $B_1$ and $B_2$ are not adjacent.
    \end{enumerate}
\end{lemma}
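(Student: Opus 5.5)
The plan is to prove all three invariants at once by induction on the number of iterations of the outer while loop. For the base case, $Bounds = \{IR(B_I)\}$ just before the loop. Since $B_I$ is sound, iterating Lemma \ref{lem:B-sound-F(B)-sound} shows that every iterate $F^k(B_I)$ lies in $\B_{sound}$, so $IR(B_I) \in \B$; this gives (1). Iterating Lemma \ref{lem:F-reductive} gives $F^{k+1}(B_I) \subseteq F^k(B_I) \subseteq B_I$, so $IR(B_I) \subseteq B_I$; this gives (2). Invariant (3) holds vacuously because $Bounds$ is a singleton.

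For the inductive step, assume (1)--(3) hold for $Bounds$ and show them for $NewBounds$, which is a union of the sets $Keep$ built from the parents $B \in Bounds$.

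\emph{Invariants (1) and (2).} Each element of $Keep$ is one of two kinds.
\begin{itemize}
\item An unmerged element is $IR(B_i)$ for some $B_i$ with $\phi(B_i)$. By definition of $\phi$, $IR(B_i) \in \B$ and $IR(B_i) \subseteq B_i$. Since \textsc{Decompose} returns $B_1 \cup B_2 = B$, we get $B_i \subseteq B$, and the induction hypothesis gives $B \subseteq B_I$.
\item A merged element is $IR(B_1) \cup IR(B_2)$ for two adjacent bounds. By the definition of adjacency this union lies in $\B$. It is also contained in $B_1 \cup B_2 = B \subseteq B_I$.
\end{itemize}
In both cases (1) and (2) transfer to $NewBounds$. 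Parents sent to $Final$ contribute nothing to $NewBounds$, so they need no treatment.

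\emph{Invariant (3) within one parent.} Two distinct elements of $NewBounds$ that come from the same parent $B$ are not adjacent: $Keep$ has at most two elements, and if those two were adjacent they would have been replaced by their union.

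\emph{Invariant (3) across parents, the main obstacle.} Here $A_1 \subseteq B$ and $A_2 \subseteq B'$ for distinct, non-adjacent $B, B' \in Bounds$, and I must show that $A_1$ and $A_2$ are not adjacent. I would argue by contradiction. Suppose $A_1 \cup A_2 = [m, M] \in \B$. Then $m$ is the lower endpoint of $A_1$ or of $A_2$, and $M$ is the upper endpoint of one of them. I would then try to transport the interval structure of $[m,M]$ up to $B \cup B'$, showing that this union is also an interval and so contradicting the non-adjacency of $B$ and $B'$.

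I am not confident this works for arbitrary lattices, since sub-intervals of non-adjacent intervals need not obviously inherit non-adjacency. If the argument fails, the fallback is to read the update of $Bounds$ as the Merge paragraph describes it: all adjacent bounds of $NewBounds$ are joined before the assignment. Under that reading (3) holds by construction, and (1)--(2) survive the extra merges by the same argument as for a merged element of $Keep$ above.
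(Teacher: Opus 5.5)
Your arguments for (1), for (2), and for (3) within a single parent are the same as the paper's and are fine. The cross-parent step you flagged is a genuine gap, and no argument can close it. Sub-intervals of non-adjacent bounds can be adjacent, and invariant (3) is in fact false for Algorithm~\ref{alg:bnb} as written. The paper's proof does not resolve this. It asserts exactly the inheritance you doubted (``since $B$ is non-adjacent with all other bounds in $Bounds$, the same holds for $IR(B_1)\cup IR(B_2)$''), and it never compares against the new sub-intervals coming from the other parents.

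Here is a counterexample. Take $L=2^{\{a,b,c\}}$, let $\{a\}$ and $\{a,b\}$ be the only fixed points of $f$, and set
\[
f(\emptyset)=\{b\},\qquad f(\{b\})=f(\{b,c\})=\{a\},\qquad f(\{c\})=f(\{a,c\})=\emptyset,\qquad f(\{a,b,c\})=\{a,c\}.
\]
Use $B_I=[\emptyset,\{a,b,c\}]$, let $IR$ iterate $F$ to a fixed point, and let \textsc{Decompose} be the paper's element split.

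\emph{First pass.} $IR(B_I)=B_I$, because the meet of all values of $f$ is $\emptyset$ and their join is $\{a,b,c\}$. Splitting on $b$, one checks $IR([\emptyset,\{a,c\}])=[\emptyset,\{a\}]$ and $IR([\{b\},\{a,b,c\}])=[\{a,b\},\{a,b,c\}]$. These two are not adjacent, so after the first pass $Bounds$ consists of exactly these two bounds.

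\emph{Second pass.} $F$ sends both $[\emptyset,\emptyset]$ and $[\{a,b,c\},\{a,b,c\}]$ to $[\{a,b,c\},\emptyset]\notin\B$, so they are pruned. The singletons $[\{a\},\{a\}]$ and $[\{a,b\},\{a,b\}]$ are fixed by $F$, and each sits alone in its $Keep$. Hence $Bounds=\{[\{a\},\{a\}],[\{a,b\},\{a,b\}]\}$. These two bounds are adjacent, since their union is the interval $[\{a\},\{a,b\}]$.

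The parents were non-adjacent only because their hull $[\emptyset,\{a,b,c\}]$ contains points such as $\{b\}$ that lie in neither parent. Shrinking the parents removes those points from the hull, which is why your transport argument cannot work.

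Your fallback is therefore the necessary repair, not an optional reading. After the \textbf{for} loop, repeatedly merge adjacent pairs in $NewBounds$ until none remain; this stops because each merge lowers $|NewBounds|$. Then (3) holds by construction, and your argument for a merged element of $Keep$ gives (1) and (2) for the merged bounds. So your proof is complete for that corrected algorithm.

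You should present this as a correction to the pseudocode, which merges only inside a single $Keep$, rather than as an interpretation of line 22. Also note that the height potential in the proof of Theorem~\ref{thm:termination} would need rechecking under cross-parent merging. For example, the merged bound $[\{a\},\{a,b\}]$ above has the same height as both of its parents.
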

\begin{proof}
    See appendix.
\end{proof}

\begin{theorem}
    \label{thm:termination}
    If $L$ satisfies both ACC and DCC, Algorithm \ref{alg:bnb} terminates in a finite number of iterations.
\end{theorem}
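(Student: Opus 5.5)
The plan is to view the execution of Algorithm \ref{alg:bnb} as a rooted tree of bounds and show that this tree is finite. The root is $IR(B_I)$. The children of a bound $B$ processed in the \textbf{for} loop are the elements of $Keep$ that are placed into $NewBounds$. Every bound present in $Bounds$ at outer iteration $k$ is a node at depth $k$. Each node has at most two children, since $Keep$ has at most two elements before merging and at most one after. So by K\H{o}nig's lemma it suffices to show that every branch of the tree is finite. Then the tree is finite, its depth is bounded by some $K$, and $Bounds = \emptyset$ after at most $K+1$ outer iterations. This also handles the inner loop: $Bounds$ stays finite at each iteration (at most $2^k$ elements), so each \textbf{for} loop runs finitely often, provided $IR$ and \textsc{Decompose} are taken to be terminating procedures, as the algorithm implicitly assumes.

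The key step is to show that each child is \emph{strictly} contained, as a set, in its parent. Let $B$ be a node that is not finalized at line 8. Then $B_1 \neq B$ and $B_2 \neq B$, and $B_i \subseteq B_1 \cup B_2 = B$, so each $B_i \subsetneq B$. An element of $Keep$ has the form $IR(B_i)$ with $\phi(B_i)$, hence $IR(B_i) \in \B$ and $IR(B_i) \subseteq B_i \subsetneq B$. If two kept bounds are adjacent and merged, their union lies in $\B$ and is contained in $B_1 \cup B_2 = B$. But the branch at line 16 finalizes $B$ when $Keep = \{B\}$. So whatever lands in $NewBounds$ is a valid bound that is a proper subset of $B$. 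I will lean on Lemma \ref{lem:bnb-invars}(1) to know that all nodes lie in $\B$, so endpoints are well defined.

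Next I translate strict set inclusion into endpoint progress. If $[a,b] \subsetneq [c,d]$ with both in $\B$, then $a, b \in [c,d]$ gives $c \leq a$ and $b \leq d$. Moreover, $(a,b) \neq (c,d)$ because equal endpoints would give equal sets. So along any branch $[a_0,b_0] \supsetneq [a_1,b_1] \supsetneq \cdots$, the lower endpoints form an ascending chain, the upper endpoints form a descending chain, and at each step at least one of them changes strictly. By ACC the sequence $(a_k)$ stabilizes, and by DCC the sequence $(b_k)$ stabilizes. Past the later of the two stabilization indices no strict step is possible, so the branch must end there. Hence every branch is finite, which completes the argument.

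The step I expect to need the most care is the strict-containment claim for merged children. I have to check that the union produced at line 14 is genuinely a subset of $B$, which holds since each piece is a subset of $B_i \subseteq B$. I also have to check that the only way to avoid strict shrinkage, namely $Keep = \{B\}$, is exactly the case diverted to $Final$. A secondary subtlety is whether the algorithm's iteration count or the tree's depth is what must be bounded. König's lemma resolves this cleanly, because finite branching together with finite branches gives a uniform depth bound.
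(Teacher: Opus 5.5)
Your proof is correct, but it takes a different route from the paper. The paper uses a ranking function. It sets $\pi(Bounds)=\max_{B\in Bounds}\ell(B)$ and notes, as you do, that everything placed in $NewBounds$ is a proper sub-interval of the bound being processed. Such a sub-interval has strictly smaller height, so $\pi$ drops by at least one per outer iteration until the loop empties. This yields an explicit bound of about $\ell(IR(B_I))$ outer iterations, which the paper reuses when motivating the budget condition. However, the argument rests on the opening claim that ACC and DCC make every $\ell(B)$ finite, and that claim is false in general. Place finite chains of every length side by side between a common $\bot$ and $\top$: the result is a complete lattice that satisfies ACC and DCC but has infinite height. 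Your argument instead applies ACC and DCC directly to the endpoint sequences along a single branch. It then uses K\"{o}nig's lemma to pass from finite branches to a finite tree, so it proves the theorem under exactly the stated hypotheses. The cost is that you get finiteness without a count. Your endpoint observation recovers the count whenever the height is finite: along a branch of depth $K$, the chain $a_0\le\cdots\le a_K\le b_K\le\cdots\le b_0$ has at least $K$ strict links, so $K\le\ell(IR(B_I))$. One presentational point: $Bounds$ is a set, so a bound produced by two parents is stored only once, and your tree should be read as the unfolding of occurrences. Since a bound's children depend only on the bound itself, this changes nothing.
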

\begin{proof}
    See appendix.
\end{proof}

\begin{theorem}
    \label{thm:anytime-completeness}
    The following invariant holds: for all $\mu \in \fix{f}$ in $B_I$, there exists a bound $B \in Final \cup Bounds$ such that $\mu \in B$.
\end{theorem}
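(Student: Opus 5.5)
We argue by induction on the iterations of the outer while loop, tracking each fixed point $\mu \in \fix{f} \cap B_I$ separately. The key tool is a \emph{preservation lemma} obtained by iterating Lemma~\ref{lem:B-sound-F(B)-sound}: if $B \in \B$ and $\mu \in B$, then $\mu \in IR(B)$, and moreover $\phi(B)$ holds. Indeed, $\mu \in B$ means $B$ is sound. Lemma~\ref{lem:B-sound-F(B)-sound} then gives $F(B) \in \B_{sound}$. The proof of that lemma actually shows the stronger fact that \emph{this particular} $\mu$ satisfies $\lfpB{f^{\ell}_B} \leq \mu \leq \gfpB{f^u_B}$, i.e.\ $\mu \in F(B)$. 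By induction on the number of applications of $F$ we get $\mu \in F^k(B)$ for all $k$, so $\mu \in IR(B)$ and $IR(B) \in \B$. Finally, Lemma~\ref{lem:F-reductive}, applied at every step, gives $IR(B) \subseteq B$. Together these yield $\phi(B)$.

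For the base case, before the loop we have $Final = \emptyset$ and $Bounds = \{IR(B_I)\}$. For every $\mu \in \fix{f} \cap B_I$ the preservation lemma gives $\mu \in IR(B_I)$, so the invariant holds initially.

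For the inductive step, fix $\mu$ and suppose that at the start of an iteration $\mu \in B$ for some $B \in Final \cup Bounds$. Since $Final$ only grows, the case $B \in Final$ is immediate. So assume $B \in Bounds$ and follow $B$ through the for loop, where there are three cases.
\begin{enumerate}
\item If the guard on line 8 fires, $B$ is added to $Final$.
\item Otherwise \textsc{Decompose} returns $B_1, B_2 \in \B$ with $B_1 \cup B_2 = B$, so $\mu \in B_i$ for some $i$. The preservation lemma gives $\phi(B_i)$ and $\mu \in IR(B_i)$, so $IR(B_i)$ is placed in $Keep$.
\item If the merge step replaces $Keep$ by $\bigcup Keep$, then $\mu$ still lies in the union, which is a valid bound by adjacency.
\end{enumerate}
Either way, $\mu$ lies in some element of $Keep$. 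That element is then added either to $Final$ (when $Keep = \{B\}$, where $B \ni \mu$ trivially) or to $NewBounds$. When the loop ends and $Bounds \coloneqq NewBounds$, $\mu$ therefore lies in some bound of $Final \cup Bounds$, which re-establishes the invariant.

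The main subtlety is the preservation lemma. Lemma~\ref{lem:B-sound-F(B)-sound} as stated only asserts that $F(B)$ is sound, i.e.\ that it contains \emph{some} fixed point, whereas we need the specific $\mu$ to survive every step. We therefore extract the pointwise statement from its proof. That proof shows $\mu \leq \gfpB{f^u_B}$ using only $\mu \in B$ and $\mu \leq f^u_B(\mu)$, which puts $\mu$ in the defining set of $\gfpB{f^u_B}$; the lower bound follows dually. We also need $IR$ to consist of finitely many applications of $F$, or a limit that still contains $\mu$ (an infimum of intervals that all contain $\mu$). A second point is to justify that pruning never discards $\mu$, which holds because $\phi(B_i)$ is guaranteed whenever $\mu \in B_i$. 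Merging only enlarges sets, so it cannot lose $\mu$.
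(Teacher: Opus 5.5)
Your proposal is correct and takes essentially the same route as the paper. Both argue by induction over outer-loop iterations, using the pointwise content of the proof of Lemma~\ref{lem:B-sound-F(B)-sound} to show that the specific $\mu$ survives each application of $F$, so that $\phi(B_i)$ holds and $IR(B_i)$ passes through $Keep$ into either $Final$ or $NewBounds$. You are slightly more careful than the paper in two places: you derive $\phi(B_i)$ explicitly by iterating Lemma~\ref{lem:F-reductive}, and you treat the line-8 guard and the merge step as explicit cases.
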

\begin{proof}
    See appendix.
\end{proof}

\begin{corollary}
    Upon termination, all fixed points of $f$ are contained in at least one interval $B \in Final$.
\end{corollary}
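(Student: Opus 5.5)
The corollary should follow by combining Theorem~\ref{thm:anytime-completeness} with the loop's exit condition. The plan is to evaluate the invariant of Theorem~\ref{thm:anytime-completeness} at the moment the outer while loop terminates, where the guard $Bounds \neq \emptyset$ has just failed. At that point $Bounds = \emptyset$, so $Final \cup Bounds = Final$. The invariant then says that every $\mu \in \fix{f}$ with $\mu \in B_I$ lies in some $B \in Final$. Termination itself is not part of what must be shown, since the statement is conditioned on it; when $L$ satisfies ACC and DCC, Theorem~\ref{thm:termination} guarantees it.

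The remaining step is to remove the qualifier ``in $B_I$'', so that the conclusion covers \emph{all} fixed points of $f$. This needs every fixed point to lie in $B_I$. I would read ``a sound bound $B_I$'' in the intended sense: the input is an interval already known to contain every fixed point. The canonical choice is $B_I = [\bot,\top]$, or AFT's approximation $[x^{\ell},x^u]$, which contains all fixed points by the theorem of Section~\ref{sec:init-approx}. More generally, any $B_I$ produced from such an interval by iterating $F$ has this property. By Lemma~\ref{lem:B-sound-F(B)-sound} applied to each individual fixed point $\mu$ (its proof shows that any $\mu \in B$ survives in $F(B)$), every fixed point in the starting interval remains in every iterate.

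The only real obstacle is this interpretive gap. The paper's definition of ``sound'' asks only that the bound contain \emph{some} fixed point, which is weaker than what the corollary needs. I would handle it by stating the corollary relative to $B_I$: upon termination, every $\mu \in \fix{f} \cap B_I$ lies in some $B \in Final$. I would then remark that for the standard inputs above, $\fix{f} \subseteq B_I$, which yields the statement as written.
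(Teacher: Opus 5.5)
Your proof is correct and matches the paper's argument: the paper gets the corollary by evaluating the invariant of Theorem~\ref{thm:anytime-completeness} at loop exit, where $Bounds = \emptyset$ gives $Final \cup Bounds = Final$. Your caveat about the qualifier ``in $B_I$'' is also correct and more careful than the paper, whose proof silently drops it: every interval of $Final$ comes from $Bounds$ and so lies inside $B_I$ by Lemma~\ref{lem:bnb-invars}, hence the statement as written needs $\fix{f} \subseteq B_I$ (true for $B_I = [\bot,\top]$ or $[x^{\ell},x^u]$), exactly as you propose.
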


\begin{corollary}
    If the intervals produced by \textsc{Decompose} are disjoint, then the invariant from Theorem \ref{thm:anytime-completeness} strengthens: for all fixed points $\mu \in B_I$ of $f$, there exists a \emph{unique} $B \in Final \cup Bounds$.
\end{corollary}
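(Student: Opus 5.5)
We argue by induction over the execution of Algorithm \ref{alg:bnb}, proving a strengthened invariant: every fixed point $\mu \in B_I$ lies in exactly one bound of $Final \cup Bounds$. Existence is Theorem \ref{thm:anytime-completeness}, so only uniqueness needs proof. It suffices to show that any two distinct members of $Final \cup Bounds$ are disjoint, at least on fixed points. We would in fact prove that the members of $Final \cup Bounds$ are pairwise disjoint as subsets of $L$.

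\emph{Initialization.} Before the loop, $Final=\emptyset$ and $Bounds=\{IR(B_I)\}$ is a single bound, so disjointness is trivial.

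\emph{Inductive step.} Consider one pass of the for-loop over a bound $B \in Bounds$. By the inductive hypothesis, $B$ is disjoint from every other member of $Final \cup Bounds \cup NewBounds$ generated so far. The algorithm removes $B$ and replaces it by one of the following:
\begin{enumerate}
  \item $B$ itself, placed in $Final$ (lines 8--9 and 16--17). Disjointness is trivially preserved.
  \item A subset of $\{IR(B_1), IR(B_2)\}$ for those $i$ with $\phi(B_i)$, or possibly the union of two such bounds.
\end{enumerate}
In the second case, $\phi(B_i)$ gives $IR(B_i) \subseteq B_i \subseteq B$. A union of such bounds is also contained in $B$. Hence every new bound is contained in $B$, and so is disjoint from every other member of the collection.

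It remains to check that the (at most two) new bounds are disjoint from each other. Since \textsc{Decompose} returns disjoint $B_1,B_2$, we get $IR(B_1)\cap IR(B_2) \subseteq B_1\cap B_2=\emptyset$. If they are merged into one bound, there is only one new bound, and nothing needs checking.

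Since $Bounds$ is rebuilt from $NewBounds$ at the end of each pass, the invariant carries over to the next iteration. Combined with existence from Theorem \ref{thm:anytime-completeness}, each fixed point $\mu\in B_I$ therefore lies in a unique member of $Final \cup Bounds$.

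\emph{Main obstacle.} The delicate point is bookkeeping rather than mathematics. We must ensure that when $B$ is processed it is really removed: $B$ is not in $NewBounds$ unless it is re-added through $Keep$, and it is not also kept in $Final$. The one case needing care is $Keep=\{B\}$, where $B$ goes to $Final$ but not to $NewBounds$, so it still appears only once.

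A second subtlety is that uniqueness is meant for bounds as set elements of $Final \cup Bounds$. If identical intervals could arise twice, they collapse to one element by set semantics. They cannot arise from different parents in any case, since distinct parents are disjoint.

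Finally, the containment $IR(B_i)\subseteq B_i$ for kept bounds must come from $\phi$ itself, i.e., from the pruning test, and not from Lemma \ref{lem:F-reductive}. This matters because $B_i$ need not be sound.
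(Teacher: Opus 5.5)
Your argument is correct and follows the paper's own reasoning. The paper states this corollary without a separate proof, but, as in the proof of Theorem~\ref{thm:search-space-smaller}, it rests on exactly your invariant: the members of $Final \cup Bounds$ are pairwise disjoint because every newly kept bound lies inside its parent and disjoint decompositions give disjoint children, and uniqueness then follows from the existence part of Theorem~\ref{thm:anytime-completeness}. Your remarks that $IR(B_i)\subseteq B_i$ must come from $\phi$ rather than Lemma~\ref{lem:F-reductive}, and that a processed bound is never kept in both $Final$ and $NewBounds$, fill in details the paper leaves implicit.
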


We substantiate the improved tightness of the resulting approximating bounds in two ways. The first is by observing that the proof of invariant (2) of Lemma \ref{lem:bnb-invars} proves $\forall B \in Bounds$, $B \subseteq IR(B_I)$. The following theorem shows the second way, proving that the total size of the search space is no larger than the AFT best approximation $IR(B_I)$. We use $|\cdot |$ to mean set cardinality for interval sets as well.

\begin{theorem}
    \label{thm:search-space-smaller}
    Let $S(Final \cup Bounds) = \sum_{B \in Final\cup Bounds} |B|$. Then, $S(Final \cup Bounds) \leq |IR(B_I)|$ is an invariant of the outer loop of Algorithm \ref{alg:bnb} if the intervals produced by \textsc{Decompose} are disjoint.
\end{theorem}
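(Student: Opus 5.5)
We argue by induction over iterations of the outer while loop, treating $S(Final \cup Bounds)$ as a potential function that can only decrease. Before the first iteration, $Final = \emptyset$ and $Bounds = \{IR(B_I)\}$, so $S(Final \cup Bounds) = |IR(B_I)|$ and the base case is an equality.

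For the inductive step, it suffices to show that each pass of the \textbf{for} loop, processing a single $B \in Bounds$, adds to $Final \cup NewBounds$ a collection of intervals whose total cardinality is at most $|B|$. Summing over $B \in Bounds$ then shows that the new value of $S$ is at most the old one. Intervals already in $Final$ are never removed, so they contribute the same amount before and after the iteration. We fix $B$ and split into the cases dictated by the algorithm.

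\begin{enumerate}
\item \emph{Line 8 case.} $B$ is moved into $Final$ unchanged, contributing exactly $|B|$.
\item \emph{General case.} \textsc{Decompose} returns disjoint $B_1, B_2$ with $B_1 \cup B_2 = B$, so $|B_1| + |B_2| = |B|$.
   \begin{itemize}
   \item If $\phi(B_i)$ holds, then by definition of $\phi$ we have $IR(B_i) \in \B$ and $IR(B_i) \subseteq B_i$, so $|IR(B_i)| \le |B_i|$. This is exactly where the $\subseteq$ conjunct of $\phi$, justified by Lemma \ref{lem:combine-lems}, is needed.
   \item If $\phi(B_i)$ fails, $B_i$ contributes nothing.
   \end{itemize}
   Hence $\sum_{K \in Keep} |K| \le |B_1| + |B_2| = |B|$ before merging.
\item \emph{Merging.} Replacing two adjacent bounds by their union can only decrease the total, since $|A_1 \cup A_2| \le |A_1| + |A_2|$.
\item \emph{Line 16 case.} $Keep = \{B\}$ and $B$ goes to $Final$, contributing $|B|$.
\item \emph{Otherwise.} $Keep$ is added to $NewBounds$ with total at most $|B|$.
\end{enumerate}

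The main obstacle is double counting when summing contributions. Different $B, B' \in Bounds$ could produce identical intervals, and $Final$ and $NewBounds$ are \emph{sets}, so duplicates collapse. This only helps the inequality, but it means $S$ should be read as a sum over the set, so the collapse must be handled with care.

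A subtler issue is whether an interval could appear in both $Final$ and $NewBounds$, or be added to $Final$ twice. In either case the set-sum again counts it once, which only lowers $S$. We therefore argue with the upper bound $S(X \cup Y) \le S(X) + S(Y)$, which holds for set-sums and avoids any need for disjointness across different $B$'s. Disjointness of the \textsc{Decompose} output is used only in the equality $|B_1| + |B_2| = |B|$.

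Finally, we note that the cardinalities may be infinite. In that case we interpret the inequality in extended cardinals (or restrict attention to finite $L$), and every step above still goes through.
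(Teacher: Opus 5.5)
Your route differs from the paper's. The paper argues globally: every interval in $Final \cup Bounds$ lies inside $IR(B_I)$ (the strengthened form of invariant (2) of Lemma~\ref{lem:bnb-invars}), and these intervals are pairwise disjoint. So the sum of their sizes equals the size of their union, which is at most $|IR(B_I)|$. You argue locally, by accounting: each $B$ processed in the for loop is replaced by intervals of total size at most $|B|$. You use disjointness only for the two pieces returned by a single call to \textsc{Decompose}, and you never need containment in $IR(B_I)$. This is a genuine advantage, because the paper asserts pairwise disjointness of the whole family without separate proof; it needs its own small induction (the intervals replacing $B$ are disjoint subsets of $B$). Conversely, once that induction is supplied, the paper's route yields the stronger structural fact that the family is pairwise disjoint, which is what the uniqueness corollary relies on.

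One step needs repair. Write $S(X) = \sum_{B \in X}|B|$ for any set $X$ of intervals. Your case analysis together with $S(X \cup Y) \le S(X) + S(Y)$ bounds the new value of $S$ by $S(Final) + S(Bounds)$, computed with the old $Final$ and $Bounds$. That equals the old value $S(Final \cup Bounds)$ only when no interval belongs to both $Final$ and $Bounds$. Such an interval would be counted once in the old $S$ but budgeted twice by your accounting: once as an existing member of $Final$, once as a $B$ being processed. Nothing in your argument rules this out, since you deliberately avoid disjointness across different intervals. So the sentence ``the new value of $S$ is at most the old one'' does not follow as written.

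The fix is immediate: carry the stronger invariant $S(Final) + S(Bounds) \le |IR(B_I)|$. It holds with equality before the first iteration, your case analysis shows it is preserved, and it implies the theorem because $S(Final \cup Bounds) \le S(Final) + S(Bounds)$. With that change the proof is complete, and, as you say, disjointness of the \textsc{Decompose} output is used only in $|B_1| + |B_2| = |B|$.
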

\begin{proof}
    This comes immediately from the fact that the bounds in $Final\cup Bounds$ are disjoint and contained in $IR(B_I)$. Thus the size of the (set) union of the intervals in $Final \cup Bounds$ is equal to $S(Final\cup Bounds)$, and the union is a subset of $IR(B_I)$, so $S(Final\cup Bounds) \leq |IR(B_I)|$.
\end{proof}

To illustrate an implementation of \textsc{Decompose} whose produced sub-intervals are disjoint, let $L = 2^S$ for some set $S$ ordered by subset inclusion. Then, if $[A,B] \in \B$ (with non-zero height), arbitrarily choose $x \in B \setminus A$. It is evident that $[A, B\setminus \{a\}]$ and $[A \cup \{a\}, B]$ are adjacent and disjoint. Observe that the effectiveness of Algorithm \ref{alg:bnb} can be dependent on the implementation of \textsc{Decompose}. If the decomposition happens to be good (in the sense that it enables the iterative refinement to greatly reduce the decomposed bounds), then the algorithm will be very effective. One may decide to implement a context-aware version of \textsc{Decompose} that can learn from decompositions that did not previously work. This would require \emph{not} finalizing refined bounds that join back to their parent bound.

As it stands, the proof of Theorem \ref{thm:termination} tells us that the outer loop runs for at most $\ell(B_I)$ iterations. In the case of $2^{S}$, this states that it runs for at most $|S|$ iterations. The issue is that $|Bounds|$ may be exponentially large, so the inner loop can take exponential time. To remedy this, we consider a \emph{budget condition} modification for bounding $|Bounds|$ to polynomial size. For bounds $B_1 = [x^{\ell}_1,x^u_1] \in \B$ and $B_2 = [x^{\ell}_2, x^u_2] \in B$, we will let $\operatorname{hull}(B_1,B_2) = [x^{\ell}_1 \sqcap x^{\ell}_2, x^u_1 \sqcup x^u_2]$ (a join under the precision ordering).

\paragraph{Budget Condition}
We can modify Algorithm \ref{alg:bnb} as follows: let $n = \ell(B_I)$ and $K = \operatorname{poly}(n)$ (and non-zero). Whenever a bound is added to $NewBounds$ and $|NewBounds| > K$, choose a pair $B_1,B_2 \in NewBounds$ that minimizes $$|\operatorname{hull}(B_1,B_2)| - (|B_1| + |B_2|)$$ and replace $B_1,B_2$ with $\operatorname{hull}(B_1,B_2)$. Note that pruning its subsets is not necessary since if $\operatorname{hull}(B_1,B_2)$ is added and there exists $C \subseteq \operatorname{hull}(B_1,B_2)$, then either $B_1,C$ or $B_2,C$ would have been chosen to minimize the delta instead. This heuristic keeps the cumulative size of the intervals as small as possible while bounding $|Bounds|$ to polynomial size. With this budget condition, we may safely assume that $|Bounds| \leq K$ is invariant. Observe that adding this condition breaks the guarantee of termination, even with finite height lattices. By Theorem \ref{thm:anytime-completeness}, we can choose to stop at any point while maintaining soundness (in $Final \cup Bounds$). We will let $T$ be a chosen number of iterations to perform before forcibly terminating. This modification upholds the invariant of Theorem \ref{thm:anytime-completeness}, and is forced to terminate in finite time given that $IR$ terminates in finite time. This is described by the following theorem and the discussion that follows. 

\begin{theorem}
    \label{thm:budget-condition}
    With the budget condition, the number of $IR$ calls in Algorithm \ref{alg:bnb} is no greater than $2KT+1 = \operatorname{poly}(n)\cdot T$.
\end{theorem}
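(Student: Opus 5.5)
The plan is a direct counting argument over the structure of Algorithm \ref{alg:bnb}, with the budget condition supplying the key invariant $|Bounds| \leq K$. There is exactly one $IR$ call outside the loop, on line 2 (computing $IR(B_I)$); this accounts for the ``$+1$''. All remaining $IR$ calls happen on line 12, inside the inner \textbf{for} loop, and each pass of that loop body calls $IR$ at most twice, once for each of $B_1$ and $B_2$ returned by \textsc{Decompose}. Passes that take the early \textbf{continue} on line 10 make no $IR$ call. If $\phi$ is checked during the refinement itself, $\phi(B_i)$ and $IR(B_i)$ share a single computation, so we count one call per $i$.

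First I would verify that $|Bounds| \leq K$ holds at the start of every outer iteration. Initially $|Bounds| = 1 \leq K$, since $K$ is nonzero. During an iteration, every insertion into $NewBounds$ that pushes its size above $K$ triggers a hull replacement. Each replacement reduces $|NewBounds|$ by one, so $|NewBounds| \leq K$ after every insertion step. One subtlety needs care here. Line 19 inserts a whole set $Keep$, which may contain two bounds, so the replacement should be understood as applied per added bound, as the condition ``whenever a bound is added'' states. Then when $Bounds \coloneqq NewBounds$ executes, the invariant is preserved.

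Given the invariant, each outer iteration runs the inner loop at most $K$ times and therefore makes at most $2K$ calls to $IR$. By the choice of $T$ as the forced cap on the number of outer iterations, there are at most $T$ outer iterations. The total is thus at most $2KT + 1$. Since $K = \operatorname{poly}(n)$ with $n = \ell(B_I)$, this equals $\operatorname{poly}(n)\cdot T$.

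The only step I expect to need real care is the invariant $|Bounds|\le K$. Specifically, I must make sure that merging two bounds via $\operatorname{hull}$ never increases the count, and that no insertion path bypasses the budget check. I will handle this by reading the budget rule as a loop that runs ``while $|NewBounds|>K$, merge the minimizing pair''. That reading makes the bound immediate regardless of how many bounds line 19 adds at once.
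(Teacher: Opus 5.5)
Your proposal is correct and follows the same counting argument as the paper: one $IR$ call at initialization, at most two per inner-loop pass, at most $K$ passes per outer iteration by the invariant $|Bounds|\le K$, and at most $T$ outer iterations. The paper simply asserts that invariant and implicitly counts $\phi(B_i)$ and $IR(B_i)$ as one computation; you justify both explicitly, reading the budget rule as ``while $|NewBounds|>K$, merge'' because line 19 can insert two bounds at once.
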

\begin{proof}
    The budget condition guarantees that $|Bounds| \leq K$ is invariant. The only time $IR$ is called is (1) at initialization, and (2) in the inner loop. Initialization contributes one $IR$ call, and each iteration of the inner loop contributes at most $2$ calls. Since $|Bounds| \leq K$, the inner loop performs at most $K$ iterations, so the inner loop contributes at most $2K$ calls. Finally, since the outer loop iterates no more than $T$ times, it contributes no more than $2KT$ times, so there are a total of $2KT + 1$ calls to $IR$.
\end{proof}

This theorem shows that, the budget condition modification, if a call to $IR$ takes $\operatorname{poly}(n)$ time (whether by structural properties of $f$, lattice abstractions, or restricted iterations of $F$), Algorithm \ref{alg:bnb} takes $\operatorname{poly}(n)\cdot T$ time. This makes note of the fact that choosing a $B_1,B_2$ to replace by their hull is a polynomial operation in $K$ (and thus in $n$). Additionally, if $T$ is chosen to be $\operatorname{poly}(n)$, then the algorithm takes $\operatorname{poly}(n)$ time. The trade off is that the produced bounds will be larger and Theorem \ref{thm:search-space-smaller} may not always hold. This trade off is, in most cases, worth it (as is later illustrated).

\subsection{An Example With Answer Set Programming}
\label{sec:asp}

Consider the problem of finding a minimum size stable model for a logic program $P$. By definition, the stable models are exactly the subsets $M$ of the Herbrand base of $P$ such that $M = S_P(M)$. It was already established that $S_P$ is not monotone, but rather it is anti-monotone. In this section, we will see how anti-monotonicity makes approximating the fixed points of $S_P$ an easy computation, and demonstrate an application of Algorithm \ref{alg:bnb} on a simple non-stratified logic program.

For $k \geq 1$, let $\SPubar{k},\SPbar{k}$ be the resulting envelopes after $k$ applications of $F$ with the function $S_P$, and let $F^k([\emptyset,H_P]) = [\xubar{k},\xbar{k}]$.
\begin{align*}
    \SPubar{k}(M) &= \inf_{M \subseteq A \subseteq \xbar{k-1}}S_P(A) = S_P(\xbar{k-1})\\
    \SPbar{k}(M) &= \sup_{\xubar{k-1} \subseteq A \subseteq M}S_P(A) = S_P(\xubar{k-1})
\end{align*}
by anti-monotonicity of $S_P(A)$. Thus, computing each of the approximating functions is a matter of a single application of $S_P$. Furthermore, since $\SPubar{k}$ and $\SPbar{k}$ are constant functions, their lfp/gfp-equivalent functions admit a single known fixed point. Thus computing $\lfpB{\SPubar{k}}$ and $\gfpB{\SPbar{k}}$ requires only a single application of $S_P$ for each, which is an easy computation. Specifically, for $k > 0$
\begin{align*}
    \lfpB{\SPubar{k}} &= \xubar{k-1} \cup S_P(\xbar{k-1})\\
    \gfpB{\SPbar{k}} &= \xbar{k-1} \cap S_P(\xubar{k-1})
\end{align*}

This approximation of the stable models is equivalently presented as an example in \cite{denecker2004}. We will show that Algorithm \ref{alg:bnb} can be used to narrow in on the \emph{exact} stable models by taking advantage of a known fixed point theorem for anti-monotone functions.

For an anti-monotone function $f$, if $\mu$ is the least fixed point of the monotone function $f^2$, then $\mu$ and $\nu = f(\mu)$ are the \emph{dual} fixed points (sometimes called the \emph{extreme oscillating pair}) of $f$ where $f(\nu) = \mu$ and $f(\mu) = \nu$ \cite{gelder1993}. Furthermore, $\mu$ is a lower bound for the fixed points of $f$ and $\nu$ is an upper bound for the fixed points of $f$.

It is not hard to see that the iterative refinement process $IR$ for an anti-monotone function beginning at $[\bot,\top]$ converges (in a finite height lattice) at the bound $[\mu,\nu]$, and thus in the case of $S_P$, the converged upper bound is a set of atoms that are true in \emph{some} stable model of $P$ and the converged lower bound is a set of atoms that is true in \emph{all} stable models of $P$ (relating to the \emph{well-founded semantics}). That is, the lower bound is \emph{certain} and the upper bound is \emph{possible} among the stable models.

We will now illustrate this with an example \emph{non-stratified} logic program $P$.

\begin{align*}
    p &\gets \neg q \qquad r \gets p\\
    q &\gets \neg p \qquad s \gets q
\end{align*}
Note that the stable models of $P$ are $\{p,r\}$ and $\{q,s\}$. Their intersection is empty and their union is the full Herbrand base, so we expect the iterative refinement to get stuck at this unimproved interval. We begin with the bounds $\xubar{0} = \emptyset$ and $\xbar{0} = H_P$ (the Herbrand base of $P$). Then as derived earlier,
\begin{align*}
    \xubar{1} &= S_P(\xbar{0}) = \emptyset\\
    \xbar{1} &= S_P(\xubar{0}) = H_P
\end{align*}
and indeed, $[\xubar{1},\xbar{1}]$ (the AFT canonical best approximation) is stuck bounding the entire space. To close in on the stable models, we apply Algorithm \ref{alg:bnb} with the input bound $H_P$. As already shown, $IR(H_P) = [\emptyset,H_P]$. Now this bound will be decomposed. To do so, we may use the method for set lattices described earlier. We first choose an arbitrary element in $H_P \setminus \emptyset$, say $p$, and split the interval into $B_1 = [\emptyset, H_P\setminus \{p\}]$ and $B_2 = [\emptyset \cup \{p\}, H_P]$. Then we apply iterative refinement on each of these bounds, beginning with the first:
$$F([\emptyset, H_P\setminus \{p\}]) = [\{q,s\},\{q,r,s\}]$$
so $\phi(B_1)$ holds and $IR(B_1)$ is kept. For $IR(B_2)$, we observe:
\[
F([\{p\}, H_P]) = [\{p\}, \{p,r\}]
\]
and at this point, it is evident that these bounds will be split to singletons and the algorithm will terminate. Discarding the singleton intervals before finalizing that do not have a stable model leaves us with $\{p,r\}$ and $\{q,s\}$, the exact stable models.

\section{Application to Speculative Program Analyses}

Finally, we have reached a point where all prior algorithms, results, and examples can work together in an application to a speculative program analysis. The problem faced in this section, when observed from a certain angle, reflects the kind of reasoning seen in stable model semantics. A standard program analysis works by soundly approximating the program state, whereas a speculative program analysis may make potentially incorrect assumptions about the program in an attempt to improve either precision or speed, and may need to retract assumptions if they are proven unsound by the analysis. A set of assumptions are \emph{stable} if they reach a fixed point in the program analysis. That is, under a stable set of assumptions $\sigma$, the least fixed point of the forward abstract transfer function under the assumptions of $\sigma$ does not disprove $\sigma$. With this framing, the parallelism to stable model semantics becomes more evident. Since they share the same flavor, we will see that the applicability of the earlier methods also apply nicely here.

Let $P$ be a program with (concrete) collecting semantics $\mathcal{S}$ and a monotone, Scott-continuous forward transformer $F_P : \mathcal{S} \to \mathcal{S}$. Let $(D,\sqsubseteq)$ be a complete lattice and $\alpha \dashv \gamma$ be a Galois connection between $\mathcal{S}$ and $D$. In a typical fashion, let $F_P^{\sharp} : D \to D$ be the standard sound abstract transfer function (that is both monotone and Scott-continuous). We will assume a finite set $\mathcal{A}$ of \emph{speculations} that will encode assertions about program states. For instance, an element in $\mathcal{A}$ may encode the assertion ``at location $\ell$, $y > 0$.'' Given an assertion $a \in \mathcal{A}$, we let $\Pi_a : D \to D$ be an abstract projector that \emph{strengthens} a given abstract state using the assumption that $a$ is true. We may reasonably require $\Pi_a$ to be monotone (if $S^{\sharp}_1 \sqsubseteq S_2^{\sharp}$, then applying the same strengthening to both preserves the ordering), reductive ($\Pi_a(S^{\sharp}) \sqsubseteq S^{\sharp}$), and idempotent ($\Pi_a(\Pi_a(S^{\sharp})) = \Pi_a(S^{\sharp})$). For a set of assertions $\sigma \subseteq \mathcal{A}$, we conveniently let $\Pi_{\sigma}$ be the pointwise meet of the abstract projectors $\{\Pi_a \mid a \in \sigma\}$.

We also let for $a \in \mathcal{A}$, $\ok{a} : D \to \{\textbf{true}, \textbf{false}, \textbf{unknown}\}$ be a function that determines $a$'s compatibility with the given abstract state. In the case of static analyses, this may be defined by
\[
\ok{a}(S^{\sharp}) = \begin{cases}
    \textbf{false} & \text{if $S^{\sharp}$ entails $\neg a$}\\
    \textbf{true} & \text{if $S^{\sharp}$ proves $a$}\\
    \textbf{unknown} & \text{otherwise}\\
\end{cases}
\]
In a dynamic analysis, this may be defined by $\ok{a}(S^{\sharp}) = \textbf{false}$ if any runtime counter-example has occurred, and \textbf{unknown} otherwise (in general a two-valued definition determining whether or not a counterexample has been seen). For any set of assertions $\sigma \subseteq \mathcal{A}$, the $\sigma$-constrained abstract transformer is $$F^{\sharp}_{P,\sigma} = \Pi_\sigma \circ F^{\sharp}_P$$ and $F^{\sharp}_{P,\sigma}$ is monotone and Scott-continuous for any fixed $\sigma$. Observe that $F^{\sharp}_{P,\sigma}$ is \emph{anti-monotone} in $\sigma$ (a smaller set of assertions leads to a larger abstract state in the precision ordering). For fixed $\sigma$, the standard abstract analysis gives $$A(\sigma) = \lfp{F^{\sharp}_{P,\sigma}}$$

To extract a set of valid assumptions, we first define the set $\mathrm{Safe}(S^{\sharp}) = \{a \in \mathcal{A} \mid \ok{a}(S^{\sharp}) \neq \textbf{false}\}$. That is, $\mathrm{Safe}(S^{\sharp})$ is the set of assumptions that are ``safe to keep for now'' (those that the abstraction can't refute). This definition can be adjusted to only keep self-justified assumptions by instead taking the assertions $a \in \mathcal{A}$ where $\ok{a}(S^{\sharp}) = \textbf{true}$. Finally, an operator for updating the current set of assumptions $\sigma$ can be defined as $$\Phi(\sigma) = \mathrm{Safe}(A(\sigma))$$

An intuition for this operator is that we begin with a set of assumptions $\sigma$, analyze the program under these assumptions, and then retract any such speculations that are falsified by $A(\sigma)$. Furthermore, the stable sets of assumptions are exactly the fixed points of $\Phi$, since $\sigma = \Phi(\sigma)$ is precisely consistent with the speculative analysis $A(\sigma)$.

\begin{lemma}
    $\mathrm{Safe} : D \to 2^{\mathcal{A}}$ is monotone.
\end{lemma}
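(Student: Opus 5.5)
The plan is to show directly that for $S^{\sharp}_1 \sqsubseteq S^{\sharp}_2$ in $D$ we have $\mathrm{Safe}(S^{\sharp}_1) \subseteq \mathrm{Safe}(S^{\sharp}_2)$, where $2^{\mathcal{A}}$ is ordered by inclusion. I would argue by contraposition on a single assertion. Fix $a \in \mathcal{A}$ with $a \notin \mathrm{Safe}(S^{\sharp}_2)$, i.e.\ $\ok{a}(S^{\sharp}_2) = \textbf{false}$. The goal is $\ok{a}(S^{\sharp}_1) = \textbf{false}$, so that $a \notin \mathrm{Safe}(S^{\sharp}_1)$. In other words, it suffices to prove that the ``refuted'' predicate $\{S^{\sharp} \mid \ok{a}(S^{\sharp}) = \textbf{false}\}$ is downward closed in $D$ for each $a$.

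For the static definition of $\ok{a}$, I would first make ``$S^{\sharp}$ entails $\neg a$'' precise semantically through the Galois connection $\alpha \dashv \gamma$. Writing $[\![\neg a]\!] \subseteq \mathcal{S}$ for the concrete states satisfying $\neg a$, the definition is that $S^{\sharp}$ entails $\neg a$ iff $\gamma(S^{\sharp}) \subseteq [\![\neg a]\!]$. Since $\gamma$ is monotone, $S^{\sharp}_1 \sqsubseteq S^{\sharp}_2$ gives $\gamma(S^{\sharp}_1) \subseteq \gamma(S^{\sharp}_2) \subseteq [\![\neg a]\!]$. Hence $S^{\sharp}_1$ also entails $\neg a$, and $\ok{a}(S^{\sharp}_1) = \textbf{false}$. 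The \textbf{true} and \textbf{unknown} cases need no separate treatment, because $\mathrm{Safe}$ only distinguishes \textbf{false} from everything else. This case split by definition of $\ok{a}$ is what keeps the argument clean.

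For the dynamic definition, $\ok{a}$ is \textbf{false} exactly when a runtime counter-example to $a$ has been observed. This condition does not depend on the abstract state argument. So $\mathrm{Safe}$ is constant in $S^{\sharp}$ and trivially monotone. The same holds for any two-valued variant whose falsity condition is independent of $S^{\sharp}$ or is downward closed.

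The main obstacle is interpretive rather than technical. The statement is only true if ``entails $\neg a$'' is stable under strengthening the abstract state. If the analysis decides entailment syntactically (e.g.\ by an incomplete domain-specific check), I would need to state explicitly the mild requirement that this check is monotone: whatever refutes $a$ at a less precise state still refutes it at a more precise one. I would argue that any reasonable sound refutation procedure satisfies this, and that it holds automatically under the semantic, $\gamma$-based reading above. Once that requirement is fixed, the proof is the two-line contrapositive.
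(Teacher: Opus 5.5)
Your proposal is correct and follows essentially the same route as the paper: both show that, for a fixed assertion $a$, being refuted is downward closed in $D$ (if $S^{\sharp}_2$ refutes $a$ and $S^{\sharp}_1 \sqsubseteq S^{\sharp}_2$, then $S^{\sharp}_1$ refutes $a$), and then take complements to get $\mathrm{Safe}(S^{\sharp}_1) \subseteq \mathrm{Safe}(S^{\sharp}_2)$. The paper only asserts the key step (``a more precise abstract state can refute more assertions''), whereas your $\gamma$-based reading of entailment, together with your separate treatment of the state-independent dynamic $\ok{a}$, actually justifies it and makes explicit the downward-closure assumption that the lemma relies on.
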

\begin{proof}
    Let $S^{\sharp}_1 \sqsubseteq S^{\sharp}_2$, and let $\mathrm{Ref}_a(S^{\sharp}) \equiv \ok{a}(S^{\sharp}) = \textbf{false}$ be a predicate that is true iff assertion $a$ is refuted by $S^{\sharp}$. Then, $\mathrm{Ref}_a(S^{\sharp}_2) \Rightarrow \mathrm{Ref}_a(S^{\sharp}_1)$ since a more precise abstract state can refute more assertions. Then taking complements in $2^{\mathcal{A}}$ gives $\mathrm{Safe}(S^{\sharp}_1) \subseteq \mathrm{Safe}(S^{\sharp}_2)$.
\end{proof}

From this lemma and the anti-monotonicity of $A$ (due to the anti-monotonicity of $F^{\sharp}_{P,\sigma}$ w.r.t. $\sigma$), the operator $\Phi$ is anti-monotone. As shown in the previous section, this simplifies the iterative refinement to alternative fixed point iteration of an anti-monotone function, promoting computational efficiency. By keeping track of the analysis results throughout an application of Algorithm \ref{alg:bnb}, it terminates with the stable sets of assumptions (either exact or approximating bounds) \emph{along} with their resulting analysis. This can be beneficial for choosing an analysis that maximizes an objective such as precision (which in the case of precision, must be maximized by at least one of $\Phi$'s fixed points). In the particular case of dynamic analyses, counter-examples may arrive in an online fashion. If such a counter-example arrives for the current set of (previously stable) assumptions, it would be useful to have a more conservative fixed point (stable set of assumptions) to fall back on. The DMT operator alone is also useful (and cheaper to compute) in this case, since the resulting lower bound will be a set of \emph{safe} assumptions that appear in every stable set of assertions. The upper bound will then be a set of assumptions that are safe in \emph{some} stable set of assertions. This approximation, as discussed earlier, may be coarse, in which case it can be fed to Algorithm \ref{alg:bnb} for improvement.

The issue with Algorithm \ref{alg:bnb} in this application is that it requires running many analyses, which can be simply infeasible for large programs. To achieve speed-ups in this regard, memoizing basic transfer results $F^{\sharp}_P(\cdot)$ and reusing them in future computations of $A$ avoids lots of redundant computation. A lazy evaluation may then be done: if $A(\sigma)$ has been computed and we need $A(\sigma')$ where $\sigma \subseteq \sigma'$, simply take the results of $A(\sigma)$ and revisit only nodes affected by $\sigma' \setminus \sigma$ (and if $\sigma' \subseteq \sigma$, the same lazy evaluation can be done by revisiting only nodes affected by retracting $\sigma \setminus \sigma'$). Since most nodes won't be affected by one assertion (they are typically targeted to one small part of the program), this lazy evaluation will be fast and Algorithm \ref{alg:bnb} will become feasible to compute. The exact details of such a lazy evaluation are outside the scope of this paper. 

In the case of optimistic hybrid analyses \cite{devecsery2018}, we likely care more about the speed and resource usage of the dynamic analysis than the static one. In this case, finding stable sets of assumptions for the static analyzer does not affect the speed of the dynamic analysis. In fact, instead of sampling many executions to find likely invariants as prescribed in \cite{devecsery2018}, this technique can instead be used to find the likely invariants in a more predictable way with guarantees.

\begin{theorem}
    All stable sets of assumptions $\sigma \subseteq \mathcal{A}$ are contained in some interval in $Final$ upon termination of Algorithm \ref{alg:bnb}.
\end{theorem}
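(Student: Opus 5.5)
The statement is essentially the corollary of Theorem \ref{thm:anytime-completeness} specialized to the operator $\Phi$ on the lattice $L = 2^{\mathcal{A}}$. The plan is therefore to check that the speculative-analysis setting meets every hypothesis used by the general results, and then to invoke them.

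First, I would fix the instantiation. Take $f = \Phi : 2^{\mathcal{A}} \to 2^{\mathcal{A}}$ and the input bound $B_I = [\emptyset, \mathcal{A}]$. This bound is trivially sound whenever a stable set exists, and contains every $\sigma \subseteq \mathcal{A}$. By definition, the stable sets of assumptions are exactly the elements of $\fix{\Phi}$, since $\sigma = \Phi(\sigma) = \mathrm{Safe}(A(\sigma))$. So "every stable set lies in some interval of $Final$" is literally the statement that every $\mu \in \fix{\Phi}$ with $\mu \in B_I$ lies in some $B \in Final$.

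Second, I would establish termination, so that "upon termination" is meaningful. Because $\mathcal{A}$ is finite, $2^{\mathcal{A}}$ has finite height $|\mathcal{A}|$ and so satisfies both ACC and DCC. Theorem \ref{thm:termination} then gives termination of Algorithm \ref{alg:bnb}. Two details need a sentence each:
\begin{itemize}
\item each $IR$ call must terminate; it does, since by the anti-monotone simplification of Section \ref{sec:asp} each application of $F$ is a single evaluation of $\Phi$ on each endpoint;
\item $A(\sigma) = \lfp{F^{\sharp}_{P,\sigma}}$ must be well defined; this follows from Knaster--Tarski, since $F^{\sharp}_{P,\sigma}$ is monotone on the complete lattice $D$.
\end{itemize}
If one wants the computation of $A(\sigma)$ itself to be finite, that needs an extra assumption such as ACC on $D$ or widening. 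I would note this, but it is not needed for the containment claim.

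Third, I would apply Theorem \ref{thm:anytime-completeness}. Its invariant says every fixed point of $\Phi$ in $B_I$ lies in some interval of $Final \cup Bounds$. At termination the while guard gives $Bounds = \emptyset$, so every fixed point lies in some interval of $Final$; this is exactly the preceding corollary. The earlier lemma (monotonicity of $\mathrm{Safe}$) and the anti-monotonicity of $\Phi$ are not needed for soundness, since Theorem \ref{thm:anytime-completeness} holds for arbitrary $f$. They matter only for efficiency, and I would say so.

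The main obstacle I expect is whether the invariant proof in the appendix silently relies on the simplified envelope formulas or on the input bound being sound. If it does, I would argue that $[\emptyset,\mathcal{A}]$ is sound precisely when a stable set exists. If none exists, the claim is vacuous.
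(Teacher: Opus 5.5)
Your proposal is correct and matches the paper's argument. The paper gives this theorem no separate proof: it is the corollary to Theorem~\ref{thm:anytime-completeness} specialized to $f = \Phi$ on the finite lattice $2^{\mathcal{A}}$, using that the stable sets are exactly $\fix{\Phi}$, with termination supplied by Theorem~\ref{thm:termination}. One small imprecision: each $IR$ call terminates because its iterates form a descending chain in a finite-height lattice, as noted at the start of the proof of Theorem~\ref{thm:termination}. Anti-monotonicity of $\Phi$ only shows that each single step is cheap, not that the iteration stops.
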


As a final note, we consider the relationship between a stable set of assumptions $\sigma$ and its corresponding analysis $A(\sigma)$, and how it can be changed. In a similar fashion as the assume-guarantee paradigm, $A(\sigma)$ is not necessarily sound with respect to the concrete program, but rather it is sound \emph{conditional} on $\sigma$. That is, the result is valid provided the speculations hold. If we instead define $\mathrm{Safe}_{\forall}(S^{\sharp}) = \{a \in \mathcal{A} \mid \ok{a}(S^{\sharp}) = \textbf{true}\}$ and $\Phi(\sigma) = \mathrm{Safe}_{\forall}(A(\sigma))$, then its fixed points contain only assumptions that have been \emph{proven} by the abstract facts. Then, $\Pi_{\sigma}$ only removes unreachable states, yielding $\alpha \circ F_P \sqsubseteq \Pi_{\sigma} \circ F^{\sharp}_P \circ \alpha = F^{\sharp}_{P,\sigma} \circ \alpha$. Thus with this definition of $\Phi$, the analysis is sound.

Given the circumstances, either an unsound or a sound analysis may be suitable. For example, the previously mentioned optimistic hybrid analysis takes advantage of the extra precision provided by unsound analyses (in hopes that it is validated). However, in a purely static setting unsoundness can lead to incorrect optimizations.

\section{Application as a Pre-Processing Step}

Theorems \ref{thm:anytime-completeness} and \ref{thm:search-space-smaller} are helpful for two reasons: (1) the final result provides a smaller search space containing the fixed points, which can be used in pre-processing, and (2) it shows that the algorithm is \emph{anytime} (and $IR(\cdot)$ is, by its definition, anytime), so it can be interrupted to have minimal impact on the total computation time. To illustrate this, we describe how it can be implemented as a pre-processing step for an ASP solver. There are a couple ways to do this.

First, suppose we have a bound $[A,B]$ on some stable models. One method is to partially evaluate the program using $A$ and $B$ by substituting $\textbf{true}$ for atoms in $A$ and $\textbf{false}$ for atoms \emph{not} in $B$ since all atoms in $A$ are in the stable models in $[A,B]$, and all atoms not in $B$ are not in those stable models. Then, simply drop the rules whose bodies become false, remove satisfied body literals, and simplify constraints/aggregates. This yields a much smaller, easier to compute program that an ASP solver can very quickly process. Having bounds on the stable models (like what is achieved by Algorithm \ref{alg:bnb}) can then significantly reduce the workload of an ASP solver.

This is illustrated in the example from Figures \ref{prog:3color} and \ref{prog:partial-eval}. Figure \ref{prog:3color} is the input logic program, and Figure \ref{prog:partial-eval} is its partial evaluation where the lower bound $A$ has atoms $red(1)$ and $green(4)$ and the upper bound $B$ does \emph{not} contain the atom $blue(2)$. While the partially evaluated program looks syntactically larger, its \emph{grounded program} (the program resulting from grounding all variables) is much smaller than the original program, as expected.

\begin{figure}[t]
  \centering
  \begin{minipage}{0.9\textwidth}
  \begin{lstlisting}
node(1..6).
edge(1,2). edge(1,3). edge(2,3). 
edge(3,4). edge(4,5). edge(5,6). edge(3,6).

red(N)  :- node(N), not blue(N), not green(N).
blue(N) :- node(N), not red(N), not green(N).
green(N) :- node(N), not red(N), not blue(N).

:- edge(N,M), red(N), red(M), N < M.
:- edge(N,M), blue(N), blue(M), N < M.
:- edge(N,M), green(N), green(M), N < M.
  \end{lstlisting}
  \end{minipage}
  \caption{A Logic Program for 3-Coloring}
  \label{prog:3color}
  \Description[Rules are described for encoding a 3-coloring problem]{The logic program contains rules encoding a graph, and rules enforcing the constraints of the 3-coloring problem on all vertices.}
\end{figure}

\begin{figure}[t]
  \centering
  \begin{minipage}{0.9\textwidth}
  \begin{lstlisting}
node(1..6).
edge(1,2). edge(1,3). edge(2,3). 
edge(3,4). edge(4,5). edge(5,6). edge(3,6).

% Truths from A
red(1). green(4).

% False atoms from H_P - B
:- blue(2)

% Facts derived from partial evaluation
:- red(2).  % from edge(1,2) and red(1)
:- red(3).  % from edge(1,3) and red(1)
:- green(3) % from edge(3,4) and green(4)
:- green(5) % from edge(4,5) and green(4)
green(2).   % not red(2) & not blue(2)
blue(3).    % not red(3) & not green(3)
:- blue(6)  % from edge(3,6) and blue(3)

% Remaining generators: only nodes 5 and 6 are undecided
red(5) :- not blue(5).
blue(5) :- not red(5).

red(6) :- not green(6).
green(6) :- not red(6).

:- red(5), red(6).
  \end{lstlisting}
  \end{minipage}
  \caption{Partially Evaluated Program with $\{red(1),green(4)\} \subseteq A$ and $blue(2) \notin B$}
  \label{prog:partial-eval}
  \Description[The complexity of the 3-coloring logic program reduced]{A partial evaluation of the logic program for 3-coloring is applied with bounds stating that node 1 is red and node 4 is green (lower bound), and node 2 is not blue (upper bound).}
\end{figure}

On the other hand, most ASP solvers support assumptions (as unit literals), which eliminates the need for the overhead of partially evaluating the program. Simply assume $a$ for all $a \in A$ and $\neg x$ for all $x \notin B$. Partial evaluation done manually still has its merits though: suppose we have bounds $[A,B]$ and $[A',B']$ where $A \subseteq A'$ (or similarly in the case $B' \subseteq B$). Then, instead of feeding the ASP solver the assumptions from $A$ and then separately feeding the assumptions from $A'$, we may partially evaluate the program by setting $A$ to true, and then store the partial evaluation $P'_A$. Then, we may later perform a partial evaluation on $P'_A$ setting the atoms $A' \setminus A$ to true. This compositional evaluation removes lots of redundant computation.

That being said, Theorem \ref{thm:anytime-completeness} gives that all the bounds in $Final \cup Bounds$ at any point contain all the fixed points, and we also know that these bounds are non-adjacent. These bounds can then be used as described earlier to speed up the computation of the ASP solver. This is seen in the diagram depicted in Figure \ref{fig:preproc}. Theorem \ref{thm:budget-condition} guarantees that applying the budget condition modification to Algorithm \ref{alg:bnb} restricts $|Bounds|$ to $\operatorname{poly}(n)$, thus refraining the process from making exponentially many calls to the ASP solver.

\begin{figure}[ht]
\centering
  \centering
  \includegraphics[width=.9\linewidth]{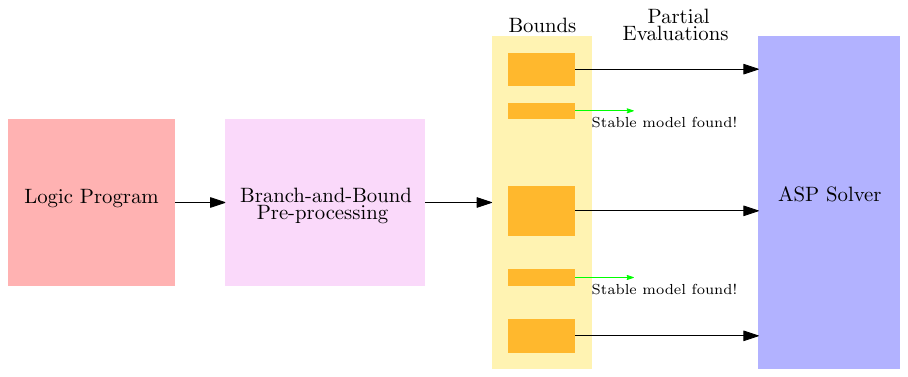}
  \caption{Insertion of a pre-processing step that bounds small subspaces containing the stable models for an ASP solver.}
  \label{fig:preproc}
  \Description[A diagram showing branch-and-bound as a pre-processing step]{The diagram depicts a logic program fed as input to the branch-and-bound algorithm, which outputs many small intervals, some of which are exact stable models. These intervals are then depicted as being fed into an ASP solver as assumptions}
\end{figure}

\section{Conclusion}

We present an AFT-oriented paradigm for modeling stability in non-monotonic processes for PL algorithms that utilize backtracking, revision, and other kinds of non-monotonic behavior. To solve these non-monotone fixed point problems in practical industrial settings, we (1) provide an abstract interpretation for the fixed point approximations, (2) relax a known AFT operator and propose an anytime branch-and-bound algorithm that takes advantage of its structural properties for pruning, and (3) show that it achieves tighter bounds than those guaranteed in AFT. We demonstrate the applicability of these methods for speculative program analysis and ASP, where we also describe its effectiveness at pre-processing for such problems.


\bibliographystyle{ACM-Reference-Format}
\bibliography{doc/refs}

\appendix

\paragraph{Proof of Theorem 5.9}
\begin{proof}
    We begin by proving that $\alpha^{\ell}(\lfpB{f^{\ell}_B}) \geq \lfpS{(f^{\flat})^{\ell}_{\alpha(B)}}{\alpha(B)}$ for all $B =\bbound \in \B^C$ which is equivalent to proving that $$\alpha^{\ell}(\lfp{\varphi^{\ell}_B}) \geq \lfp{(\varphi^{\flat})^{\ell}_{\alpha(B)}}$$ where $$\varphi^{\ell}_B(x) = x^{\ell} \sqcup_C f^{\ell}_B(x) \qquad (\varphi^{\flat})^{\ell}_{\alpha(B)} = \alpha^{\ell}(x^{\ell}) \sqcup_A (f^{\flat})^{\ell}_{\alpha(B)}(x)$$

    To show this, we first show that $\alpha^{\ell} \circ \varphi^{\ell}_B \geq (\varphi^{\flat})^{\ell}_{\alpha(B)} \circ \alpha^{\ell}$. Let $x \in C$ and observe
    \begin{align*}
        \alpha^{\ell}(\varphi^{\ell}_B(x)) = \alpha^{\ell}(x^{\ell} \sqcup_C f^{\ell}_B(x)) = \alpha^{\ell}(x^{\ell}) \sqcup_A \alpha^{\ell}(f^{\ell}_B(x))
    \end{align*}
    so it suffices to show that $\alpha^{\ell}(f^{\ell}_B(x)) \geq (f^{\flat})^{\ell}_{\alpha(B)}(\alpha^{\ell}(x))$.

    \begin{align*}
        \alpha^{\ell}(f^{\ell}_B(x))& = \alpha^{\ell}\left(\inf_{x \leq y \leq x^u} f(y)\right) = \inf_{x \leq y \leq x^u} \alpha^{\ell}(f(y)) \\&\geq \inf_{x \leq y \leq x^u} f^{\flat}(\alpha^{\ell}(y)) = \inf_{\alpha^{\ell}(x) \leq z \leq \alpha^{\ell}(x^u)} f^{\flat}(\alpha^{\ell}(y)) \\&\geq (f^{\flat})^{\ell}_{\alpha(B)}(\alpha^{\ell}(x))
    \end{align*}
    by $\alpha^{\ell} \leq \alpha^u$ ($\alpha^{\ell}(x) \leq \alpha^u(x) \Leftrightarrow x \leq \gamma(\alpha^u(x))$ which is always true). Thus $\alpha^{\ell} \circ f^{\ell}_B \geq (f^{\flat})^{\ell}_{\alpha(B)}\circ \alpha^{\ell}$, so $\alpha^{\ell} \circ \varphi^{\ell}_B \geq (\varphi^{\flat})^{\ell}_{\alpha(B)} \circ \alpha^{\ell}$.

    Observe that for any pre-fixed point $x \geq \varphi^{\ell}_B(x)$, we have $\alpha^{\ell}(x) \geq \alpha^{\ell}(\varphi^{\ell}_{B}(x)) \geq (\varphi^{\flat})^{\ell}_{\alpha(B)}(\alpha^{\ell}(x))$, so $\alpha^{\ell}$ maps the set of pre-fixed points of $\varphi^{\ell}_{\alpha(B)}$ to a subset of the pre-fixed points of $(\varphi^{\flat})^{\ell}_{\alpha(B)}$. Since $\alpha^{\ell}$ preserves meets, we then have
    \begin{align*}
        \alpha^{\ell}(\lfp{\varphi^{\ell}_B}) &= \alpha^{\ell}\left(\inf \{x \in C \mid x \geq \varphi^{\ell}_B(x)\}\right) = \inf \alpha^{\ell}\left(\{x \in C \mid x \geq \varphi^{\ell}_B(x)\}\right) \\&\geq \inf \{x \in A \mid x \geq (\varphi^{\flat})^{\ell}_{\alpha(B)}(\alpha^{\ell}(x))\} = \lfp{(\varphi^{\flat})^{\ell}_{\alpha(B)}}
    \end{align*}

    A dual argument shows that $\alpha^u(\gfp{\varphi^u_B}) \leq \gfp{(\varphi^{\sharp})^u_{\alpha(B)}}$ where
    $$\varphi^{u}_B(x) = x^{u} \sqcup_C f^{u}_B(x) \qquad (\varphi^{\sharp})^u_{\alpha(B)} = \alpha^u(x^u) \sqcup_A (f^{\sharp})^u_{\alpha(B)}(x)$$

    and therefore, $\alpha(F(B)) \leq_p F^{\sharp}(\alpha(B))$ by the definition of $F^{\sharp}$.
\end{proof}

\paragraph{Proof of Lemma 6.1}

\begin{proof}
    (1) Upon entering the outer loop, $Bounds$ is a singleton set $\{IR(B_I)\}$ which is in $\B$ (and sound) by Lemma 5.3 since $B_I$ is sound. Inside the loop, the only line where $Bounds$ is updated is line 22, where it is set to $NewBounds$. The only line where $NewBounds$ is updated is line 19, where it adds the elements of $Keep$, which contains either one or two elements. If $Keep$ contains only one element, then it is either the union of two adjacent intervals $IR(B_1),IR(B_2)$ or it is $IR(B_i)$ for either $B_1$ or $B_2$. In the former case, it must be the case that $IR(B_1)\cup IR(B_2) \in \B$ by definition of adjacency. In the latter case, $\phi(B_i)$ must be true, so $IR(B_i) \in \B$. In the case that $Keep$ contains two elements, then they must be $IR(B_1)$ and $IR(B_2)$, and they are not adjacent. Furthermore, it must be the case that $\phi(B_1)$ and $\phi(B_2)$ are true, so $IR(B_1),IR(B_2) \in \B$. Thus, $NewBounds$ only contains elements in $\B$, and the invariant is true at the end of each loop. Upon exiting the loop, $Bounds = \emptyset$, so the statement is trivial.

    (2) This statement initially holds since $IR(B_I) \subseteq B_I$ by Lemma 5.4. We now look at $NewBounds$ to see that the statement holds for it. At line 19, $Keep$ is either the union of two adjacent bounds $IR(B_1)$ and $IR(B_2)$, or it contains one or two sets $IR(B_i)$. In the former case, the invariant inductively tells us that $B \subseteq B_I$, so by correctness of \textsc{Decompose}, $B_1,B_2 \subseteq B$, and by $\phi(B_i)$, $IR(B_i) \subseteq B_i$. Thus, $IR(B_1) \cup IR(B_2) \subseteq B_1 \cup B_2 \subseteq B \subseteq B_I$. In the latter case, $\phi(B_i)$ gives that $IR(B_i) \subseteq B_i \subseteq B \subseteq B_I$. Line 22 is the only time that $Bounds$ is updated in this loop, and it is set to $NewBounds$, so the invariant is maintained at the end of each iteration for $Bounds$. Upon exiting the loop, the statement is trivially true since $Bounds = \emptyset$.

    (3) The statement is vacuously true initially ($Bounds$ does not contain any pair of distinct bounds). At line 19, it is guaranteed that the intervals in $Keep$ are non-adjacent by the if-statement's guard on line 13. If they were adjacent prior that if-statement, they would be joined into a new interval $IR(B_1) \cup IR(B_2) \subseteq B$, and since $B$ is non-adjacent with all other bounds in $Bounds$, the same holds for $IR(B_1) \cup IR(B_2)$. It is then evident that all intervals in $NewBounds$ are non-adjacent. Thus, since $Bounds$ is set to $NewBounds$ at line 22, the invariant is maintained at the end of each iteration of the outer loop. Upon exiting the outer loop, the statement is vacuously true.
\end{proof}

\paragraph{Proof of Theorem 6.2}
\begin{proof}
    First note that since $L$ satisfies ACC and DCC, $\ell(B)$ is defined for every $B \in \B$ and $\{F^k(B)\}$ is guaranteed to stabilize at some $k'$ (and thus any instantiation of $IR$ terminates). We prove termination by showing that $$\pi(Bounds) = \max_{B \in Bounds} \ell(B)$$ strictly decreases at each iteration, and the while loop terminates after $\pi(Bounds) = 0$.

    Consider a loop of the for loop beginning on line 6. First observe that at line 12, $\ell(B) > 0$ $\ell(B_1),\ell(B_2) < \ell(B)$ since if that were not the case then the guard of the if-statement on line 8 would be true (by interval properties, since any sub-interval with the same height or greater of the entire interval must be the entire interval), and the iteration would be skipped (by correctness of \textsc{Decompose}). After line 12, $$\pi(Keep) = \max\{\ell(IR(B_i)) \mid i \in \{1,2\} \wedge \phi(B_i)\} \leq \max\{\ell(B_i) \mid i \in \{1,2\} \wedge \phi(B_i)\} < \ell(B)$$ If $IR(B_1)$ and $IR(B_2)$ are adjacent (assuming they are both added) and $IR(B_1) \cup IR(B_2) = B$, then line 17 is taken and $B$ is not added back to $Bounds$. We can say that $B$ \emph{contributed} $-\ell(B)$ to the next iteration. If $IR(B_1) \cup IR(B_2) \neq B$, then it is still a strict subset, so $\ell(IR(B_1) \cup IR(B_2)) < \ell(B)$, and $B$ is effectively replaced by $IR(B_1) \cup IR(B_2)$. We say that $B$ contributed $\ell(IR(B_1) \cup IR(B_2)) - \ell(B) < 0$ to the next iteration in this case. If $IR(B_1)$ and $IR(B_2)$ are not adjacent, then $B$ is replaced by both of them. Since their heights are strictly smaller than $B$'s, we say that $B$ contributed $\pi(Keep) - \ell(B) < 0$ to the next iteration. If, w.l.o.g., only $IR(B_1)$ was kept, then $B$ contributed $\ell(B_1) - \ell(B) < 0$ to the next iteration. In all cases, $B$ contributed negative to the next iteration, and it is clear to see that since this is the case for each $B \in Bounds$, the max of $Bounds$ must strictly decrease. 

    If $\pi(Bounds) = 0$, then either $Bounds = \emptyset$, in which case the loop terminates, or $\forall B \in Bounds$, $\ell(B) = 0$, in which case they will all be removed from $Bounds$ as a result of the if-statement on line 8. On the next iteration, it will then necessarily be the case that $Bounds = \emptyset$ and the loop terminates.
\end{proof}

\paragraph{Proof of Theorem 6.3}
\begin{proof}
    Let $\mu \in \fix{f}$ and $\mu \in B_I$. Then $B_I \in \B_{sound}$ and by the proof of Lemma 5.3, $\mu \in IR(B_I)$. Upon entering the loop, $Bounds = \{IR(B_I)\}$, so the invariant is initially true.

    Now inductively let $\mu$ be a fixed point of $f$ contained in some bound $B$ of $Bounds$ (if there are none, then all fixed points are in $Final$ inductively by the invariant). Then, by correctness of \textsc{Decompose}, $\mu \in B_1$ or $\mu \in B_2$ after line 7. Assume w.l.o.g. that $\mu \in B_1$. If line 9 is reached, then it must be the case that $B_1 = B$, so after line 9 $\mu$ will be in a bound in $Final$ since $\mu \in B$. If line 9 is not reached, then after line 12 we have that $IR(B_1) \in Keep$ since $B_1$ is sound (so $\phi(B_1)$). By the proof of Lemma 5.3, we have that $\mu \in IR(B_1)$. By the if-else statement on line 16, either $\bigcup Keep$ is added to $Final$, in which case $\mu$ is now in a bound in $Final$, or $Keep$ is added to $NewBounds$, in which case $\mu$ is now in a bound in $NewBounds$. At line 22, $Bounds$ is set to $NewBounds$, so $\mu$ would then be in a bound in $Bounds$ by the end of the iteration. Thus the invariant holds at the end of each iteration.

    Upon exiting the loop, the invariant is still upheld and $Bounds = \emptyset$ so all fixed points are contained in a bound in $Final$.
\end{proof}

\end{document}